\newtheorem*{definition*}{Definition}
\newtheorem*{proposition*}{Proposition}
\newtheorem*{theorem*}{Theorem}
\newtheorem*{lemma*}{Lemma}
\renewcommand{\paragraph}[1]{\addcontentsline{toc}{section}{#1}\emph{#1.}---}
\begin{document}
\title{Contextuality as a Precondition for Quantum Entanglement}

\author{Martin Pl\'{a}vala}
\affiliation{Naturwissenschaftlich-Technische Fakult\"{a}t, Universit\"{a}t Siegen, 57068 Siegen, Germany}
%\email{martin.plavala@uni-siegen.de}

\author{Otfried G\"{u}hne}
\affiliation{Naturwissenschaftlich-Technische Fakult\"{a}t, Universit\"{a}t Siegen, 57068 Siegen, Germany}
%\email{otfried.guehne@uni-siegen.de}

\begin{abstract}
Quantum theory features several phenomena which can be considered as resources for information processing tasks. Some of these effects, such as entanglement, arise in a nonlocal scenario, where a quantum state is distributed between different parties. Other phenomena, such as contextuality, can be observed if quantum states are prepared and then subjected to sequences of measurements. We use robust remote state preparation to connect the nonlocal and sequential scenarios and provide an intimate connection between different resources: We prove that entanglement in a nonlocal scenario can arise only if there is preparation and measurement contextuality in the corresponding sequential scenario and that the absence of entanglement implies the absence of contextuality. As a direct consequence, our result allows us to translate any inequality for testing preparation and measurement contextuality into an entanglement test; in addition, entanglement witnesses can be used to design novel contextuality inequalities.
\end{abstract}

\maketitle
%%%%%%%%%%%%%%%%%%%%%%%%%%%%%%%%%%%%%%%%%%%%%%%%%%
\paragraph{Introduction}
%%%%%%%%%%%%%%%%%%%%%%%%%%%%%%%%%%%%%%%%%%%%%%%%%%
Quantum information science bears the promise to lead to novel ways of information processing which are superior to classical methods. This begs the question which quantum phenomena are responsible for the quantum advantage and which resources are needed to overcome classical limits. There are two main scenarios where genuine quantum effects are studied. First, in the nonlocal scenario (NLS), two parties, Alice and Bob, share a bipartite quantum state $\rho_{AB}$ and perform different measurements on it. This leads to a joint probability distribution for the possible outcomes. Second, in the sequential scenario (SQS), Bob prepares some quantum state $\sigma$ and transmits the quantum state to Alice, who performs a measurement. Clearly, these scenarios are connected: In the NLS Alice and Bob can, using classical communication, postselect on the outcome of Bob's measurement, so that Bob remotely prepares the state $\sigma$ for Alice, see also Fig.~\ref{fig:results}.

In both scenarios, several notions of non-classicality are known and have been identified as resources for special tasks. For the NLS a major example is entanglement which arises if the quantum state cannot be generated by local operations and classical communication \cite{GuhneToth-entanglement, HorodeckiHorodeckiHorodeckiHorodecki-entanglement}. Entanglement has been identified as a resource for tasks like quantum key distribution \cite{CurtyLewensteinLu-entanglementForQKD} or quantum metrology \cite{PezzeSmerzi-entanglement, TothApellaniz-metrology}. Other examples of non-classicality in the NLS are quantum steering \cite{UolaCostaNguyenGuhne-steering} and Bell nonlocality \cite{BrunnerCavalcantiPironioScaraniWehner-BellNonlocality}.

For the SQS a major notion of non-classicality is contextuality \cite{Spekkens-contextuality,BudroniCabelloGuhneKleinmann-contextuality}; we will consider the preparation and measurement (P\&M) contextuality, sometimes also called simplex embeddability \cite{SchmidSelbyWolfeKunjwalSpekkens-noncontextuality, SchmidSelbyPuseySpekkens-nonconModels, SelbySchmidWolfeSainzKunjwalSpekkens-contextuality, SelbySchmidWolfeSainzKunjwalSpekkens-GPTfragments}. Contextuality can be viewed as a resource in various tasks, such as quantum state discrimination \cite{SchmidSpekkens-stateDiscrimination}, cryptography \cite{SpekkensBuzacottKeehnTonerPryde-parityObliviousMultiplexing,AmbainisBanikChaturvediKravchenkoRai-RACcontextuality,ChaillouxKerenidisKunduSikora-RACcontextuality}, quantum computation \cite{HobanCampbellLoukopoulosBrowne-MBQCcontextuality,HowardWallmanVeitchEmerson-ComputationContextuality,Raussendorf-MBQCcontextuality,SchmidHaoxingSelbyPusey-noncontextual,Shahandeh-contextualityComputing}, and metrology \cite{Lostaglio-contextuality}.

%%%%%%%%%%%%%%%%%%%%%%%%%%%%%%%%%%%%%%%%%%%%%%%%%%%%%%%%%%%%%%%%%%%%5
\begin{figure}
\includegraphics[width=\linewidth]{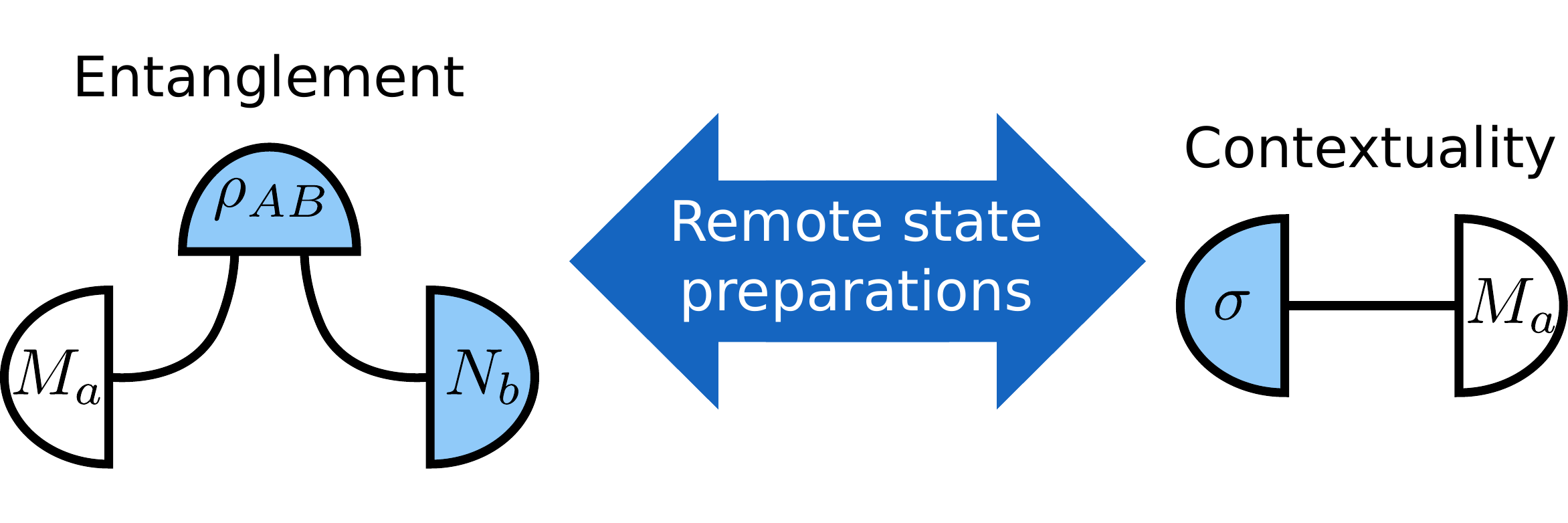}
\caption{Connections between entanglement in the nonlocal scenario and contextuality in the sequential scenario. In the nonlocal scenario, one considers probability distributions of the type $p(a,b|\rho_{AB})=\Tr[(M_a \otimes N_b) \rho_{AB}]$, where $M_a$ and $N_b$ describe measurements. In the sequential scenario, Alice prepares a quantum state $\sigma$, transmits it to Bob, leading to the probabilities $p(a|\sigma)=\Tr(\sigma M_a)$. We show that entanglement in the nonlocal scenario is practically equivalent to noncontextuality in the sequential scenario, if the state $\sigma \sim \Tr_B[(\I_a \otimes N_b) \rho_{AB}]$ is prepared remotely by performing a measurement on Bob's part of the nonlocal scenario.}
\label{fig:results}
\end{figure}
%%%%%%%%%%%%%%%%%%%%%%%%%%%%%%%%%%%%%%%%%%%%%%%%%%%%%%%%%%%%%%%%%%%%5

Are there any connections between quantum resources arising in the nonlocal and the sequential scenario? This is a key question for understanding the quantum advantage in information processing. For quantum key distribution it was already observed some time ago that prepare and measure schemes (like the BB84 protocol) can be mapped to entanglement-based schemes, which allows for a common security analysis of both scenarios based on entanglement theory \cite{CurtyLewensteinLu-entanglementForQKD, BennettBrassardMermin-QKD}. More recently, the notion of remote state preparation was used to show that steerability of a quantum state $\rho_{AB}$ corresponds to preparation noncontextuality \cite{Plavala-contextualitySteering}, while steerability of an assemblage corresponds to measurement noncontextuality \cite{TavakoliUola-contextuality}, see also Ref.~\cite{SelbySchmidWolfeSainzKunjwalSpekkens-contextuality} for a discussion.

In this paper we show that entanglement in the NLS corresponds to P\&M contextuality in the SQS. We use the fact that any bipartite quantum state gives rise to some sequential scenario by using a kind of remote state preparation \cite{BennettDiVincenzoShorSmolinTerhalWootters-remoteStatePreparation, JevticPuseyJenningsTerry-steeringEllipsoids}. Then, P\&M contextuality in this SQS is a precondition of entanglement of the bipartite state. Our results show that P\&M contextuality is a notable model of classicality in the SQS since it corresponds to entanglement in NLS and they imply that one can map noncontextuality inequalities to entanglement witnesses and that one can use classes of entanglement witnesses to design noncontextuality inequalities. Additionally, we also discuss the impact of our results on quantum key distribution. Our research was motivated by recent findings on a connection between noncontextuality and the mathematical notion of unit separability \cite{GittonWoods-unitSeparability}. Our results significantly differ from the previously known connections between Kochen-Specker contextuality and Bell nonlocality \cite{AbramskyBrandenburger-contextuality,AcinFritzLeverrierSainz-contextuality,CabelloSeveriniWinter-contextuality,AmaralCuhna-contextuality,WagnerBarbosaGalvao-graphIneq} since not every entangled state exhibits Bell nonlocality and despite the similar name, Kochen-Specker contextuality and Spekkens contextuality are strictly different operational concepts exhibiting different non-classical properties, even though there are known connections between Spekkens contextuality and Bell nonlocality \cite{LiangSpekkensWiseman-BellContextuality,SchmidSpekkensWolfe-noncontextualityIneq,SchmidSpekkens-stateDiscrimination}.

%%%%%%%%%%%%%%%%%%%%%%%%%%%%%%%%%%%%%%%%%%%%%%%%%%
\paragraph{Entanglement and remote preparations}
%%%%%%%%%%%%%%%%%%%%%%%%%%%%%%%%%%%%%%%%%%%%%%%%%%
Assume that two remote parties, Alice and Bob, share a bipartite quantum state $\rho_{AB}$. We then say that the bipartite state $\rho_{AB}$ is entangled if it cannot be prepared using local operations and classical communication \cite{HeinosaariZiman-MLQT}. This is the same as requiring that the state $\rho_{AB}$ is not separable, i.e., there is no decomposition of the form $\rho_{AB} = \sum_{i} p_i \sigma_i^A \otimes \sigma_i^B$, where the $p_i$ form a probability distribution and $\sigma_i^A$ and $\sigma_i^B$ are some states of Alice's and Bob's system.

Given a bipartite state $\rho_{AB}$, Bob can apply a measurement to his part of the system and announce the outcome, which results in remotely preparing the state $\Tr_B[(\I_A \otimes E_B) \rho_{AB}]$ for Alice. While a single remotely prepared state does not capture the properties of the bipartite state $\rho_{AB}$ shared between Alice and Bob, it is intuitive that the set of all possible remotely preparable states should have some properties based on whether the shared bipartite state $\rho_{AB}$ is entangled or not. In order to investigate this, we denote by $\Lambda_A(\rho_{AB})$ the set of all possible states that Bob can remotely prepare for Alice using the shared bipartite state $\rho_{AB}$. Mathematically $\Lambda_A(\rho_{AB})$ is defined as
\begin{equation}
\begin{split}
\Lambda_A(\rho_{AB}) = \{ &\sigma_A \in \dens(\Ha_A): \exists E_B \geq 0 \; \text{s.t.} \\
&\sigma_A = \Tr_B[(\I_A \otimes E_B) \rho_{AB}] \},
\end{split}
\end{equation}
where $\Ha_A$ is the Hilbert space corresponding to Alice's system, $\dens(\Ha_A)$ is the set of density matrices on $\Ha_A$, and $E_B \geq 0$ means that $E_B$ is a positive semidefinite operator. $\Lambda_B(\rho_{AB})$ is defined analogously. Note that these sets of quantum states play a role in recent approaches to tackle the problem of quantum steering \cite{JevticPuseyJenningsTerry-steeringEllipsoids, NguyenNguyenGuhne-steering}.

\paragraph{Contextuality}
There are several notions of contextuality, here we are interested in the so-called preparation \& measurement (P\&M) noncontextual models in the sense of Spekkens \cite{Spekkens-contextuality,BudroniCabelloGuhneKleinmann-contextuality}. This notion of contextuality is operationally well defined and straightforward to generalize to operational and probabilistic theories \cite{SchmidSelbySpekkens-contextualityDefense}. The details of this approach were previously heavily discussed in the literature, we will provide only an abridged introduction and refer the reader to existing literature for in-depth treatment \cite{Spekkens-contextuality, BudroniCabelloGuhneKleinmann-contextuality, SchmidSelbyWolfeKunjwalSpekkens-noncontextuality, SchmidSelbyPuseySpekkens-nonconModels, SelbySchmidWolfeSainzKunjwalSpekkens-contextuality, SelbySchmidWolfeSainzKunjwalSpekkens-GPTfragments, SchmidSelbySpekkens-contextualityDefense, Plavala-contextualitySteering, MullerGarner-testingQT, AcinFritzLeverrierSainz-contextuality, LiangSpekkensWiseman-BellContextuality, SchmidSpekkensWolfe-noncontextualityIneq, SchmidSelbyPuseySpekkens-nonconModels, TavakoliUola-contextuality, Spekkens-ontological}.

In this approach one considers equivalence classes of preparations, in quantum theory represented by density matrices, and equivalence classes of measurements, in quantum theory represented by positive operator-valued measures (POVMs). A hidden variable model for a given set of preparations and measurements is given by the probability $p(\lambda|\rho)$ of preparing the the hidden variable $\lambda$ given the density matrix $\rho$ and by the response functions, that is, by probability $p(a|\oM, \lambda)$ of observing the outcome $a$ given the hidden variable $\lambda$ and the POVM $\oM$ was measured. The probability $p(a|\oM, \rho)$ of observing the outcome $a$ if $\rho$ was prepared and $M$ was measured must satisfy $p(a|\oM, \rho) = \sum_\lambda p(\lambda|P) p(a|\oM, \lambda)$.

In order to obtain the definition of P\&M contextuality, we will assume that the distribution of the hidden variable $p(\lambda|\rho)$ can be obtained only by having access to a single copy of the prepared system and that the response function can be obtained only by having access to single use of the measurement device; operational arguments imply that $p(\lambda|\rho)$ must be a linear function of the density matrix and the response function must be a linear function of the POVM \cite{Plavala-contextualitySteering, MullerGarner-testingQT}.

Given that we can prepare only states $\rho \in K \subset \dens(\Ha)$, where $K$ is a convex subset of density matrices, and that we can measure all measurements, a quantum system has a P\&M noncontextual model if for every density matrix $\rho \in K$ and for every POVM $\oM = \{M_a\}$, $0 \leq M_a \leq \I$, $\sum_a M_a = \I$, we have $\Tr(\rho M_a) = \sum_\lambda p(\lambda|\rho) p(a|\lambda, \oM)$ where $p(\lambda|\rho)$ is a linear function of $\rho$ and $p(a|\lambda, \oM)$ is a linear function of $M_a$. It follows that there are operators $N_\lambda$ such that $p(\lambda | \rho) = \Tr(\rho N_\lambda)$ that satisfy the positivity and normalization conditions:
\begin{align} \label{eq:contextuality-positivityNormalization}
 & \Tr(\rho N_\lambda) \geq 0
 & & \mbox{ and }
 & & \sum_\lambda \Tr(\rho N_\lambda) = 1,
\end{align}
for all $\rho \in K$. Note that this does not imply that the $N_\lambda$ form a POVM. In fact, whenever $K$ is a strict subset of the density matrices, then according to the hyperplane separation theorem \cite{Rockafellar-convex} there is some $N_\lambda$ that satisfies the positivity condition in Eq.~\eqref{eq:contextuality-positivityNormalization} that is not a positive semidefinite operator. Analogically, since $p(a|\lambda, \oM)$ is a linear function of $M_a$, we have $p(a|\lambda, \oM) = \Tr(\omega_\lambda M_a)$, where $\omega_\lambda \in \dens(\Ha)$. Putting everything together, we get the following definition that is sufficient for our needs:
\begin{definition*}[P\&M noncontextual model]
There exists preparation \& measurement (P\&M) noncontextual model for $K \subset \dens(\Ha)$ if for every $\rho \in K$ and every POVM $\oM = \{M_a\}$ we have
\begin{equation}
\label{eq:contextuality-def}
\Tr(\rho M_a) = \sum_\lambda \Tr(\rho N_\lambda) \Tr(\omega_\lambda M_a),
\end{equation}
where $\omega_\lambda \in \dens(\Ha)$ and $N_\lambda$ are operators satisfying \eqref{eq:contextuality-positivityNormalization}.
\end{definition*}
As in our definition, we always consider the P\&M noncontextual model of $K \subset \dens(\Ha)$ with respect to all measurements since our results connect entanglement and P\&M noncontextuality with respect to all measurements.

%%%%%%%%%%%%%%%%%%%%%%%%%%%%%%%%%%%%%%%%%%%%%%%%%%
\paragraph{Main Results}
%%%%%%%%%%%%%%%%%%%%%%%%%%%%%%%%%%%%%%%%%%%%%%%%%%
We can directly formulate our first main result:

\begin{restatable}{theorem}{resultsNCtoSEP} \label{thm:results-NCtoSEP}
Let $\rho_{AB}$ be a bipartite quantum state and assume that there exists a P\&M noncontextual model for the set of states $\Lambda_A(\rho_{AB})$ and all possible measurements. Then, $\rho_{AB}$ is separable.
\end{restatable}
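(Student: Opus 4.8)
\emph{Proof plan.} The idea is to use the noncontextual model to force every state Bob can remotely prepare for Alice to be a mixture of one \emph{fixed} family of states $\{\omega_\lambda\}$, and then to plug this into a tomographic reconstruction of $\rho_{AB}$ built from remote preparations on Alice's side; after interchanging two sums this gives a decomposition $\rho_{AB}=\sum_\lambda \omega_\lambda\otimes R_\lambda$ with Hermitian $R_\lambda$ on $\Ha_B$, and the positivity condition \eqref{eq:contextuality-positivityNormalization} upgrades it to a separable decomposition by making each $R_\lambda$ positive semidefinite. To start I would turn \eqref{eq:contextuality-def} into an operator identity: since it holds for \emph{every} POVM, apply it to the dichotomic POVMs $\{E,\I-E\}$ as $E$ ranges over all effects $0\le E\le\I$; both sides are linear in $E$ and effects span the operator space, so for every $\sigma\in\Lambda_A(\rho_{AB})$,
\begin{equation}
\sigma=\sum_\lambda \Tr(\sigma N_\lambda)\,\omega_\lambda ,
\end{equation}
with the \emph{same} $\{\omega_\lambda\},\{N_\lambda\}$ for all $\sigma$ and with coefficients $\Tr(\sigma N_\lambda)\ge 0$ summing to $1$. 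The same holds for any (unnormalized) $P=\Tr_B[(\I_A\otimes E_B)\rho_{AB}]$ with $E_B\ge 0$, in the form $P=\sum_\lambda \Tr(PN_\lambda)\,\omega_\lambda$, since such $P$ is a nonnegative multiple of an element of $\Lambda_A(\rho_{AB})$.

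Next I would fix a basis $\{F_k\}$ of the Hermitian operators on $\Ha_B$ consisting of \emph{positive semidefinite} operators (possible because the positive cone has nonempty interior), with Hilbert--Schmidt dual basis $\{\tilde F_k\}$, $\Tr(F_j\tilde F_k)=\delta_{jk}$. The tomographic identity $\rho_{AB}=\sum_k \Tr_B[(\I_A\otimes F_k)\rho_{AB}]\otimes\tilde F_k$ expresses $\rho_{AB}$ through the operators $P_k:=\Tr_B[(\I_A\otimes F_k)\rho_{AB}]\ge 0$. Substituting $P_k=\sum_\lambda \Tr(P_kN_\lambda)\,\omega_\lambda$ and swapping the two sums yields
\begin{equation}
\rho_{AB}=\sum_\lambda \omega_\lambda\otimes R_\lambda ,\qquad R_\lambda:=\sum_k \Tr(P_kN_\lambda)\,\tilde F_k ,
\end{equation}
and the single-system version of the same reconstruction identity, applied on $\Ha_B$, shows $R_\lambda=\Tr_A[(N_\lambda\otimes\I_B)\rho_{AB}]$.

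The heart of the argument — and the step I expect to be the main obstacle, being the only place where the positivity condition in \eqref{eq:contextuality-positivityNormalization} does real work — is showing $R_\lambda\ge 0$. For $|\psi\rangle\in\Ha_B$ I would write $\langle\psi|R_\lambda|\psi\rangle=\Tr(N_\lambda\,\tau_\psi)$ with $\tau_\psi:=\Tr_B[(\I_A\otimes|\psi\rangle\langle\psi|)\rho_{AB}]\ge 0$; if $\Tr(\tau_\psi)>0$ then $\tau_\psi/\Tr(\tau_\psi)\in\Lambda_A(\rho_{AB})$, so \eqref{eq:contextuality-positivityNormalization} gives $\Tr(N_\lambda\tau_\psi)\ge 0$, while if $\Tr(\tau_\psi)=0$ then $\tau_\psi=0$. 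Hence $R_\lambda\ge 0$, and taking the trace of $\rho_{AB}=\sum_\lambda \omega_\lambda\otimes R_\lambda$ shows that the $q_\lambda:=\Tr(R_\lambda)$ form a probability distribution, so $\rho_{AB}=\sum_\lambda q_\lambda\,\omega_\lambda\otimes(R_\lambda/q_\lambda)$ is separable. Finally I would note that the hypothesis of \emph{all} measurements is used twice — for the operator identity on Alice's side, and, through all rank-one effects $|\psi\rangle\langle\psi|$ on Bob, to probe positivity of $R_\lambda$ on all of $\Ha_B$ — and that the sums over $\lambda$ are to be read as integrals in general, which affects none of the steps.
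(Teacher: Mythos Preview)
Your proof is correct and follows essentially the same route as the paper: both arrive at the decomposition $\rho_{AB}=\sum_\lambda \omega_\lambda\otimes R_\lambda$ with $R_\lambda=\Tr_A[(N_\lambda\otimes\I_B)\rho_{AB}]$ and then verify $R_\lambda\ge 0$ via the positivity condition \eqref{eq:contextuality-positivityNormalization} applied to $\Tr_B[(\I_A\otimes F_B)\rho_{AB}]$ for arbitrary $F_B\ge 0$. The only cosmetic difference is that the paper obtains the decomposition by directly varying product test operators $M_a\otimes E_B$, whereas you first upgrade \eqref{eq:contextuality-def} to the operator identity $\sigma=\sum_\lambda\Tr(\sigma N_\lambda)\,\omega_\lambda$ and then route through a tomographic basis on $\Ha_B$.
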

The proof is given in the Appendix~\ref{appendix:thm-results-NCtoSEP}. The underlying idea is that in the P\&M noncontextual model from Eq.~(\ref{eq:contextuality-def}) the term $\Tr(\omega_\lambda M_a)$ can be interpreted as measurement on Alice's side of a separable decomposition $\rho_{AB} = \sum_\lambda \omega_\lambda \otimes K_\lambda$, where Bob's parts $K_\lambda$ in the decomposition are given by $K_\lambda= \Tr_A[(N_\lambda \otimes \I_B) \rho_{AB}]$.

In the following theorem we will prove that a separable state $\rho_{AB} = \sum_\lambda \omega_\lambda \otimes K_\lambda$ yields remotely preparable set $\Lambda_A(\rho_{AB})$ with a P\&M noncontextual model if the operators $K_\lambda$ in the separable decomposition can be chosen such that they belong to the linear hull of $\Lambda_B(\rho_{AB})$. 
We will get rid of this condition later by considering robust remote preparations.

\begin{restatable}{theorem}{resultsSEPtoNC} \label{thm:results-SEPtoNC}
Let $\rho_{AB}$ be a separable bipartite quantum state with the decomposition $\rho_{AB} = \sum_\lambda \omega_\lambda \otimes K_\lambda$, where $\omega_\lambda \geq 0$, $\Tr(\omega_\lambda) = 1$ and $K_\lambda \geq 0$. Assume that $K_\lambda$ belongs to the linear hull of $\Lambda_B(\rho_{AB})$ for all $\lambda$. Then there exists P\&M noncontextual model for $\Lambda_A(\rho_{AB})$.
\end{restatable}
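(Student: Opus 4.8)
The plan is to build the P\&M noncontextual model directly from the given decomposition $\rho_{AB} = \sum_\lambda \omega_\lambda \otimes K_\lambda$: I will take the states $\omega_\lambda$ appearing in Eq.~\eqref{eq:contextuality-def} to be exactly these $\omega_\lambda$, so that the whole problem reduces to producing, once and for all, Hermitian operators $N_\lambda$ on $\Ha_A$ that generate the correct weights. First I would unpack membership in $\Lambda_A(\rho_{AB})$: if $\sigma_A \in \Lambda_A(\rho_{AB})$ then $\sigma_A = \Tr_B[(\I_A \otimes E_B)\rho_{AB}]$ for some $E_B \geq 0$, and inserting the separable decomposition gives $\sigma_A = \sum_\lambda \Tr(K_\lambda E_B)\,\omega_\lambda$, hence $\Tr(\sigma_A M_a) = \sum_\lambda \Tr(K_\lambda E_B)\,\Tr(\omega_\lambda M_a)$ for every POVM element $M_a$. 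Comparing with Eq.~\eqref{eq:contextuality-def}, it therefore suffices to find fixed operators $N_\lambda$ with $\Tr(\sigma_A N_\lambda) = \Tr(K_\lambda E_B)$ for all $\sigma_A \in \Lambda_A(\rho_{AB})$ and all witnesses $E_B$.

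This is the only nontrivial step, and it is where the linear-hull hypothesis is used. The linear span of $\Lambda_B(\rho_{AB})$ is precisely the image of the Hermitian operators on $\Ha_A$ under the map $F_A \mapsto \Tr_A[(F_A \otimes \I_B)\rho_{AB}]$; since each $K_\lambda$ lies in this span, there exists a Hermitian $N_\lambda$ on $\Ha_A$ with $K_\lambda = \Tr_A[(N_\lambda \otimes \I_B)\rho_{AB}]$ --- these are exactly the operators appearing in the sketch of Theorem~\ref{thm:results-NCtoSEP}. The desired identity is then the elementary adjointness relation between the two partial-trace maps:
\[
\Tr(K_\lambda E_B) = \Tr\big[(N_\lambda \otimes E_B)\rho_{AB}\big] = \Tr\big(N_\lambda \,\Tr_B[(\I_A \otimes E_B)\rho_{AB}]\big) = \Tr(\sigma_A N_\lambda),
\]
valid for any $E_B$ that witnesses $\sigma_A$.

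It then remains to check the constraints of Eq.~\eqref{eq:contextuality-positivityNormalization} on $K = \Lambda_A(\rho_{AB})$. Positivity is immediate, since $\Tr(\sigma_A N_\lambda) = \Tr(K_\lambda E_B) \geq 0$ because $K_\lambda \geq 0$ and $E_B \geq 0$; normalization follows from $\sum_\lambda \Tr(\sigma_A N_\lambda) = \sum_\lambda \Tr(K_\lambda E_B) = \Tr(\rho_B E_B) = \Tr(\sigma_A) = 1$, using $\sum_\lambda K_\lambda = \Tr_A \rho_{AB} = \rho_B$ and that $\sigma_A$ is normalized. Since the $\omega_\lambda$ are density matrices by hypothesis, Eq.~\eqref{eq:contextuality-def} then holds with this $\{\omega_\lambda\}$ and $\{N_\lambda\}$, which is the claimed model. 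The one point worth stating carefully is that a single family $\{N_\lambda\}$ works uniformly over $\Lambda_A(\rho_{AB})$ and does not depend on the (possibly non-unique) witness $E_B$; this is automatic, because $\Tr(\sigma_A N_\lambda)$ depends only on $\sigma_A$ and $N_\lambda$, which forces the value $\Tr(K_\lambda E_B)$ to coincide for every witness. I do not expect a genuine obstacle beyond recognizing that ``$K_\lambda$ in the linear hull of $\Lambda_B(\rho_{AB})$'' is precisely the surjectivity statement that allows the $K_\lambda$ to be pulled back to operators on Alice's side.
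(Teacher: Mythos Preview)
Your argument is correct and rests on the same construction as the paper's: take the $\omega_\lambda$ from the separable decomposition and choose $N_\lambda$ to be a preimage of $K_\lambda$ under the steering map $\Psi_{A\to B}^\rho\colon F_A\mapsto \Tr_A[(F_A\otimes\I_B)\rho_{AB}]$, so that the adjointness identity $\Tr(K_\lambda E_B)=\Tr(\sigma_A N_\lambda)$ yields Eq.~\eqref{eq:contextuality-def}. The paper arrives at this via a slightly longer route: it first proves an auxiliary lemma under the weaker hypothesis $\Pi_B^\rho(K_\lambda)\ge 0$, building the preimage explicitly as $N_\lambda=\Phi^*(K_\lambda)$ with $\Phi$ the Schmidt-basis pseudo-inverse of $\Psi_{B\to A}^\rho$, and then deduces the theorem by noting that the linear-hull assumption forces $\Pi_B^\rho(K_\lambda)=K_\lambda$. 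Your direct reading of ``$K_\lambda$ in the linear hull of $\Lambda_B(\rho_{AB})$'' as a surjectivity statement bypasses this machinery and is cleaner for the theorem as stated; the paper's detour buys a marginally more general intermediate lemma, which is not used elsewhere in the manuscript.
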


The proof can be found in the Appendix~\ref{appendix:thm-results-SEPtoNC}. In the dimension restricted case where we assume that both Alice and Bob have locally only a single qubit, $\dim(\Ha_A) = \dim(\Ha_B) = 2$, one can see from the separability criteria presented in \cite{JevticPuseyJenningsTerry-steeringEllipsoids} that the condition on the $K_\lambda$ to be in the linear hull of $\Lambda_B(\rho_{AB})$ is not necessary and the state is separable if and only if there exists P\&M noncontextual model for $\Lambda_A(\rho_{AB})$. The following example shows that Theorem~\ref{thm:results-SEPtoNC} cannot be directly extended to all states. Consider the separable qubit-ququart state
\begin{equation} \label{eq:results-exm-SEPgivesC}
\begin{split}
\rho_{AB} &= \frac{1}{4}( \ketbra{0} \otimes \ketbra{00} + \ketbra{1} \otimes \ketbra{01} \\
&+ \ketbra{+} \otimes \ketbra{10} + \ketbra{-} \otimes \ketbra{11} ).
\end{split}
\end{equation}
This does not meet the condition in Theorem \ref{thm:results-SEPtoNC}; at least for the decomposition in Eq.~(\ref{eq:results-exm-SEPgivesC}) this is obvious: $\Lambda_B(\rho_{AB})$ is three-dimensional, while there are four linearly independent $K_\lambda$. On the other hand, $\Lambda_A(\rho_{AB}) = \conv(\{\ketbra{0}, \ketbra{1}, \ketbra{+}, \ketbra{-} \})$ does not have a P\&M noncontextual model, see Eq.~ \eqref{eq:ineqToWitness-example} below.

We now want to get a version of Theorem~\ref{thm:results-SEPtoNC} without the condition on the $K_\lambda$ to be in the linear hull of $\Lambda_B(\rho_{AB})$. Clearly, the condition is met if the set $\Lambda_B(\rho_{AB})$ spans the entire operator space. If $\dim(\Ha_A) = \dim(\Ha_B)$, then for almost all separable states $\tau_{AB}$ we have that $\Lambda_B(\tau_{AB})$ spans the entire operator space, but note that this is not true for the maximally mixed state. We will use this insight to get rid of the pathological cases. The proof can be found in the Appendix~\ref{appendix:thm-results-SEPtoRobustNC}.

\begin{restatable}{theorem}{resultsSEPtoRobustNC} \label{thm:results-SEPtoRobustNC}
Let $\dim(\Ha_A) = \dim(\Ha_B)$ and let $\rho_{AB}$ be a separable quantum state. Then for almost all separable quantum states $\tau_{AB}$ there is a $\delta(\tau_{AB})>0$ depending on $\tau_{AB}$, such that for every $\varepsilon \in (0, \delta)$ there exists a P\&M noncontextual model for $\Lambda_A[(1-\varepsilon) \rho_{AB} + \varepsilon \tau_{AB}]$.
\end{restatable}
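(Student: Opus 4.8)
The plan is to reduce to Theorem~\ref{thm:results-SEPtoNC} by arranging that its hypothesis on the $K_\lambda$ becomes automatic, which happens exactly when $\Lambda_B$ of the perturbed state spans the whole real space of Hermitian operators on $\Ha_B$. Write $\rho_{AB}^\varepsilon = (1-\varepsilon)\rho_{AB} + \varepsilon\tau_{AB}$; separability of $\rho_{AB}$ and $\tau_{AB}$ gives separability of $\rho_{AB}^\varepsilon$, and concatenating separable decompositions of the two summands yields a decomposition $\rho_{AB}^\varepsilon = \sum_\lambda \omega_\lambda \otimes K_\lambda$ with $\omega_\lambda$ states and $K_\lambda \geq 0$. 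If the real linear span of $\Lambda_B(\rho_{AB}^\varepsilon)$ is all of the Hermitian operators on $\Ha_B$, then every $K_\lambda$ trivially lies in it, so Theorem~\ref{thm:results-SEPtoNC} applies and produces a P\&M noncontextual model for $\Lambda_A(\rho_{AB}^\varepsilon)$. Everything thus reduces to showing that, for almost all separable $\tau_{AB}$, this span is full for all sufficiently small $\varepsilon > 0$.

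For a state $\eta_{AB}$ I would introduce the real-linear map $\Phi_\eta$ on Hermitian operators given by $\Phi_\eta(X_A) = \Tr_A[(X_A \otimes \I_B)\eta_{AB}]$. Since every Hermitian $X_A$ is a difference of two positive operators, and since applying the defining construction of $\Lambda_B$ to a positive $E_A$ yields a nonnegative multiple of an element of $\Lambda_B(\eta_{AB})$, the real linear span of $\Lambda_B(\eta_{AB})$ coincides with the image of $\Phi_\eta$. Here the hypothesis $\dim \Ha_A = \dim \Ha_B =: d$ enters: $\Phi_\eta$ maps a $d^2$-dimensional real vector space into one of the same dimension, so fullness of the span is equivalent to invertibility of $\Phi_\eta$, that is, to $\det \Phi_\eta \neq 0$ computed in any fixed basis. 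As $\eta_{AB} \mapsto \Phi_\eta$ is linear, $\eta_{AB} \mapsto \det \Phi_\eta$ is a polynomial in the entries of $\eta_{AB}$.

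It then suffices to exhibit one separable state with invertible $\Phi$: the polynomial is then not identically zero, hence vanishes only on a measure-zero subset of the full-dimensional set of separable states. The natural candidate is $\tau_{AB}^{\star} = \frac{1}{d^2}\sum_{i=1}^{d^2} \ketbra{\psi_i} \otimes \ketbra{\phi_i}$, where $\{\ketbra{\psi_i}\}$ and $\{\ketbra{\phi_i}\}$ are chosen to be bases of the Hermitian operators on $\Ha_A$ and $\Ha_B$ respectively; then $\Phi_{\tau^\star}(X_A) = \frac{1}{d^2}\sum_i \langle\psi_i| X_A |\psi_i\rangle \, \ketbra{\phi_i}$ factors as the bijection $X_A \mapsto (\langle\psi_i|X_A|\psi_i\rangle)_i$ followed by the bijection $(c_i)_i \mapsto \frac{1}{d^2}\sum_i c_i \ketbra{\phi_i}$, so $\Phi_{\tau^\star}$ is invertible. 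Hence $\det \Phi_{\tau_{AB}} \neq 0$ for almost every separable $\tau_{AB}$.

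Finally, fixing such a $\tau_{AB}$, I would use that $\Phi_{\rho^\varepsilon} = (1-\varepsilon)\Phi_\rho + \varepsilon \Phi_\tau$, so $\varepsilon \mapsto \det \Phi_{\rho^\varepsilon}$ is a polynomial in $\varepsilon$ that is nonzero at $\varepsilon = 1$, hence not identically zero and with finitely many roots; taking $\delta(\tau_{AB}) > 0$ below its smallest positive root (and at most $1$) gives $\det \Phi_{\rho^\varepsilon} \neq 0$ for all $\varepsilon \in (0,\delta)$, so the first paragraph yields a P\&M noncontextual model for $\Lambda_A(\rho_{AB}^\varepsilon)$ for all such $\varepsilon$. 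I expect the only real obstacle to be the explicit construction witnessing that the determinant polynomial does not vanish identically --- equivalently, that a generic separable state has a remotely preparable set of full linear dimension --- and this is exactly the step that needs the equal-dimensions hypothesis; the rest is routine linear algebra together with the standard fact that the zero set of a nonzero polynomial is Lebesgue null.
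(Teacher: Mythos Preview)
Your proposal is correct and follows essentially the same approach as the paper: both reduce to Theorem~\ref{thm:results-SEPtoNC} by showing that the partial-trace superoperator associated to the perturbed state is invertible, argue this holds for almost all $\tau_{AB}$ via a determinant polynomial, and then use that $\varepsilon \mapsto \det \Phi_{\rho^\varepsilon}$ is a nonconstant polynomial nonvanishing at $\varepsilon = 1$ to find the threshold $\delta$. The only cosmetic difference is the explicit witness you choose to show the determinant polynomial is not identically zero --- you use a separable state built from rank-one bases, whereas the paper uses a small perturbation of the maximally mixed state along orthonormal traceless directions --- but this plays the same role in both arguments.
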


%%%%%%%%%%%%%%%%%%%%%%%%%%%%%%%%%%%%%%%%%%%%%%%%%%
\paragraph{Mapping noncontextuality inequalities to entanglement witnesses}
%%%%%%%%%%%%%%%%%%%%%%%%%%%%%%%%%%%%%%%%%%%%%%%%%%
Using the results of Theorem~\ref{thm:results-SEPtoRobustNC} one can obtain entanglement witnesses from noncontextuality inequalities. The only caveat is that the noncontextuality inequalities must be formulated in terms of unnormalized states. This is necessary to account for the fact that $\Tr_B[(\I_A \otimes E_B) \rho_{AB}]$ is not normalized for $E_B \geq 0$. This is because the set of $E_B$ such that $\Tr[(\I_A \otimes E_B) \rho_{AB}] = 1$ in general depends on $\rho_{AB}$.

We will demonstrate the method using the noncontextuality inequality presented in Ref.~\cite{MazurekPuseyKunjwalReschSpekkens-noncontextuality}. Let $K$ be the set of allowed preparations. By $\cone(K)$ we denote the set of all unnormalized allowed preparations, that is all operators of the form $\mu \tilde{\sigma}$, where $\mu \in \RR$, $\mu \geq 0$ and $\tilde{\sigma} \in K$. Let $\sigma_{t,b} \in \cone(K)$ for $t \in \{1,2,3\}$ and $b \in \{0,1\}$ be such that $\sigma_* = \frac{1}{2} (\sigma_{t,0} + \sigma_{t,1})$ is the same for all $t\in \{1,2,3\}$ and let $M_{t,b}$ be positive operators, $M_{t,b} \geq 0$, such that $\frac{1}{3} \sum_{t=1}^3 M_{t,b} = \frac{\I}{2}$ and $M_{t,0} + M_{t,1} = \I$. In other words, $M_{t,b}$ are three binary POVMs such that their uniform mixture corresponds to the random coin toss. Then the unnormalized version of the noncontextuality inequality from Ref.~\cite{MazurekPuseyKunjwalReschSpekkens-noncontextuality} is as follows: If there is a P\&M noncontextual model for $K$, then we have $\sum_{t=1}^3 \sum_{b=0}^1 \Tr(\sigma_{t,b} M_{t,b}) \leq 5 \Tr(\sigma_*)$. See the Appendix~\ref{appendix:prop-ineqToWitness} for the proof of this modified noncontextuality inequality. Using this inequality we obtain the following entanglement witness:
\begin{restatable}{proposition}{ineqToWitness} \label{prop:ineqToWitness}
Let $\rho_{AB}$ be a separable quantum state. Let $E_{t,b}$ and $M_{t,b}$ be positive operators, $E_{t,b} \geq 0$ and $M_{t,b} \geq 0$, such that $E_* = \frac{1}{2}(E_{t,0} + E_{t,1})$, $\frac{1}{3} \sum_{t=1}^3 M_{t,b} = \frac{\I_A}{2}$ and $\I_A = M_{t,0} + M_{t,1}$ for all $t\in \{1,2,3\}$ and $b \in \{0,1\}$. Then
\begin{equation}
\label{eq:ineqToWitness-ineqSEP}
\sum_{t=1}^3 \sum_{b=0}^1
\Tr[(M_{t,b} \otimes E_{t,b}) \rho_{AB}]
\leq
5 \Tr[(\I_A \otimes E_*) \rho_{AB}].
\end{equation}
Moreover, there is an entangled state $\rho_{AB}$ that violates Eq.~\eqref{eq:ineqToWitness-ineqSEP} for suitable choice of the operators $E_{t,b}$ and $M_{t,b}$.
\end{restatable}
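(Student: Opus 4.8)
\emph{Proof proposal.}---The plan for inequality~\eqref{eq:ineqToWitness-ineqSEP} is to reduce it, term by term in a separable decomposition, to the unnormalized noncontextuality inequality of Ref.~\cite{MazurekPuseyKunjwalReschSpekkens-noncontextuality} stated above. Fix a decomposition $\rho_{AB}=\sum_i p_i\,\sigma_i^A\otimes\sigma_i^B$ with $p_i\geq 0$ and normalized states $\sigma_i^A,\sigma_i^B$, and set $q_i^{t,b}=p_i\,\Tr(E_{t,b}\sigma_i^B)\geq 0$ and $r_i=p_i\,\Tr(E_*\sigma_i^B)\geq 0$. Then $\Tr[(M_{t,b}\otimes E_{t,b})\rho_{AB}]=\sum_i q_i^{t,b}\,\Tr(M_{t,b}\sigma_i^A)$ and $\Tr[(\I_A\otimes E_*)\rho_{AB}]=\sum_i r_i$, while $E_*=\tfrac12(E_{t,0}+E_{t,1})$ gives $q_i^{t,0}+q_i^{t,1}=2r_i$ independently of $t$. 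Hence it suffices to prove, for each fixed $i$, the pointwise estimate $\sum_{t,b}q_i^{t,b}\Tr(M_{t,b}\sigma_i^A)\leq 5r_i$; summing over $i$ then yields~\eqref{eq:ineqToWitness-ineqSEP}. But this pointwise estimate is precisely the modified noncontextuality inequality applied to the one-element preparation set $K=\{\sigma_i^A\}$ with the unnormalized states $q_i^{t,b}\sigma_i^A\in\cone(K)$ (whose average $r_i\sigma_i^A$ is $t$-independent) and with the given $M_{t,b}$: a one-element set trivially admits a P\&M noncontextual model (one hidden variable, $N=\I$, $\omega=\sigma_i^A$), so the hypothesis of that inequality is met.

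If a self-contained proof of the pointwise estimate is preferred, I would argue directly: writing $q_i^{t,0}=r_i+s_t$ and $q_i^{t,1}=r_i-s_t$ with $|s_t|\leq r_i$, and using $M_{t,0}+M_{t,1}=\I_A$ together with $\Tr(\sigma_i^A)=1$, one gets $\sum_{t,b}q_i^{t,b}\Tr(M_{t,b}\sigma_i^A)=3r_i+\sum_{t=1}^3 s_t x_t$ with $x_t=2\Tr(M_{t,0}\sigma_i^A)-1\in[-1,1]$; the constraint $\tfrac13\sum_t M_{t,0}=\tfrac{\I_A}{2}$ forces $\sum_t x_t=0$, three numbers in $[-1,1]$ summing to zero have $\sum_t|x_t|\leq 2$, and therefore $\sum_t s_t x_t\leq r_i\sum_t|x_t|\leq 2r_i$, which gives the bound $5r_i$. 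One may also note that the operators $\sigma_{t,b}:=\Tr_B[(\I_A\otimes E_{t,b})\rho_{AB}]$ lie in $\cone(\Lambda_A(\rho_{AB}))$ and have $t$-independent average, so the two sides of~\eqref{eq:ineqToWitness-ineqSEP} are literally the two sides of the noncontextuality inequality for $K=\Lambda_A(\rho_{AB})$; together with Theorems~\ref{thm:results-NCtoSEP} and~\ref{thm:results-SEPtoRobustNC} this also shows~\eqref{eq:ineqToWitness-ineqSEP} is a genuine contextuality witness.

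For the ``moreover'' part I would exhibit the violation with the two-qubit maximally entangled state $\rho_{AB}=\ketbra{\Phi^+}$, for which $\Tr_B[(\I_A\otimes E_B)\rho_{AB}]=\tfrac12 E_B^{T}$, so that every rescaled qubit state is remotely preparable. Take three unit Bloch vectors $\vec m_1,\vec m_2,\vec m_3$ in the $x$-$z$ plane at mutual angles $120^\circ$ (so $\sum_t\vec m_t=0$ and the corresponding projectors are real), let $M_{t,0}$ and $M_{t,1}$ be the rank-one projectors onto the $+\vec m_t$ and $-\vec m_t$ directions --- so that $M_{t,0}+M_{t,1}=\I_A$ and $\tfrac13\sum_t M_{t,b}=\tfrac{\I_A}{2}$ --- and set $E_{t,b}=2M_{t,b}\geq 0$, which gives $E_*=\tfrac12(E_{t,0}+E_{t,1})=\I_A$ for all $t$. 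A short calculation then gives $\Tr[(M_{t,b}\otimes E_{t,b})\rho_{AB}]=\Tr(M_{t,b}^2)=1$ for each of the six terms and $\Tr[(\I_A\otimes E_*)\rho_{AB}]=1$, so the left-hand side of~\eqref{eq:ineqToWitness-ineqSEP} equals $6>5$.

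The identities in the first two paragraphs are routine linear algebra. The step I expect to need the most care is the last one: one has to produce a \emph{single} entangled state together with operators $E_{t,b}$ and $M_{t,b}$ satisfying \emph{both} families of linear constraints simultaneously (the average $E_*$ independent of $t$, and the coin-toss/complementarity conditions on the $M_{t,b}$) while remotely preparing, from that one state, the six states forming the trine configuration that saturates the scenario of Ref.~\cite{MazurekPuseyKunjwalReschSpekkens-noncontextuality}. Using the maximally entangled state makes $\Lambda_A(\rho_{AB})$ large enough for this to go through cleanly; for a noisier entangled state one would instead have to track how the value $6$ is reduced and verify that it still exceeds $5\,\Tr[(\I_A\otimes E_*)\rho_{AB}]$.
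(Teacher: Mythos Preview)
Your proof is correct but takes a different route for the inequality~\eqref{eq:ineqToWitness-ineqSEP} than the paper. The paper invokes Theorem~\ref{thm:results-SEPtoRobustNC}: for separable $\rho_{AB}$ there is a P\&M noncontextual model for $\Lambda_A[(1-\varepsilon)\rho_{AB}+\varepsilon\tau_{AB}]$, one applies the unnormalized noncontextuality inequality to that set, and then lets $\varepsilon\to 0^+$. You instead expand a separable decomposition and establish the bound term by term, either by invoking the noncontextuality inequality on the trivially noncontextual singleton $K=\{\sigma_i^A\}$, or via the self-contained $\sum_t|x_t|\le 2$ argument. Your approach is more elementary and does not rely on Theorem~\ref{thm:results-SEPtoRobustNC} at all; the paper's route, by contrast, is what showcases its main thesis, namely that Theorem~\ref{thm:results-SEPtoRobustNC} furnishes a \emph{general} mechanism for converting noncontextuality inequalities into entanglement witnesses. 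Your third remark (invoking Theorems~\ref{thm:results-NCtoSEP} and~\ref{thm:results-SEPtoRobustNC}) is essentially the paper's argument in outline; Theorem~\ref{thm:results-NCtoSEP} is not actually needed there, and one must carry out the limit $\varepsilon\to 0^+$ explicitly, which the paper does. For the ``moreover'' part, both you and the paper use the two-qubit maximally entangled state together with real trine projectors and $E_{t,b}=2M_{t,b}$ (equivalently $E_{t,b}=d\,\sigma_{t,b}^\intercal$); the constructions coincide, and your computation of the value $6$ versus $5$ is correct.
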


The proof follows from Theorem~\ref{thm:results-SEPtoRobustNC}, since if $\rho_{AB}$ is separable, then Eq.~\eqref{eq:ineqToWitness-ineqSEP} is satisfied for all $\Lambda_A[(1-\varepsilon) \rho_{AB} + \varepsilon \tau_{AB}]$ for all $\varepsilon \in (0, \delta)$. This is then used to construct the corresponding entanglement witness. The full proof can be found in the Appendix~\ref{appendix:prop-ineqToWitness}.

Being more concrete, one can write down an explicit entanglement witness from observables leading to a violation of the original noncontextuality inequality \cite{BudroniCabelloGuhneKleinmann-contextuality}, see the Appendix~\ref{appendix:prop-ineqToWitness} for detailed construction. We obtain that for every separable state $\rho_{AB}$ we have $\Tr[(\sigma_x \otimes \sigma_x) \rho_{AB}] + \Tr[(\sigma_z \otimes \sigma_z) \rho_{AB}] \leq \frac{4}{3}$ which is a weakened version of the well-known witness $\Tr[(\sigma_x \otimes \sigma_x) \rho_{AB}] + \Tr[(\sigma_z \otimes \sigma_z) \rho_{AB}] \leq 1$ \cite{Toth-witness, BruknerVedral-witness, DowlingDohertyBarlett-witness}; still, this inequality is violated by the maximally entangled state.

%%%%%%%%%%%%%%%%%%%%%%%%%%%%%%%%%%%%%%%%%%%%%%%%%%
\paragraph{Mapping entanglement witnesses to noncontextuality inequalities}
%%%%%%%%%%%%%%%%%%%%%%%%%%%%%%%%%%%%%%%%%%%%%%%%%%
Noncontextuality inequalities are inequalities that are satisfied by any operational model that is P\&M noncontextual. Since they are theory independent, noncontextuality inequalities are experimentally feasible method of certifying non-classicality of experimental setups without having to trust the physical implementation. We will also sketch other potential uses for noncontextuality inequalities in quantum key distribution inspired by our results in later section. While some techniques for constructing noncontextuality inequalities are known \cite{SchmidSpekkensWolfe-noncontextualityIneq,KrishnaSpekkensWolfe-noncontextualityIneq}, our results enable us to derive additional noncontextuality inequalities. This is not so straightforward, as we cannot use just a single entanglement witness, but we must map a whole class of entanglement witnesses to get a class of noncontextuality inequalities. We will proceed with an example that showcases this. We will use a class of entanglement witnesses that comes from the Clauser-Horne-Shimony-Holt (CHSH) inequality \cite{ClauserHorneShimonyHolt-CHSH, ClauserHorneShimonyHolt-CHSHerratum}, but similar a approach works for other Bell inequalities as well \cite{PlavalaGuhne-NCineqFromBell}.

Let $A_i$ and $B_i$ for $i \in \{1,2\}$ be observables such that $-\I_A \leq A_i \leq \I_A$ and $-\I_B \leq B_i \leq \I_B$ and let $\rho_{AB}$ be a separable state. Then we have $\Tr\{[A_1 \otimes (B_1 + B_2) + A_2 \otimes (B_1 - B_2)] \rho_{AB}\} \leq 2$. In order to obtain a noncontextuality inequality proceed as follows. Let $B_{i+}$ and $ B_{i-}$ be the the positive and negative parts of $B_i$, respectively, that is, $B_{i+}$ and $ B_{i-}$ are operators such that $B_{i\pm} \geq 0$, $B_{i+} B_{i-} = 0$ and $B = B_{i+} - B_{i-}$. Denote $\sigma_{i\pm} = \Tr_B[(\I_A \otimes B_{i\pm}) \rho_{AB}]$. Then we have
\begin{widetext}
\begin{equation}
\label{eq:witnessToIneq-correspondence}
2 \geq \Tr\{[A_1 \otimes (B_1 + B_2) + A_2 \otimes (B_1 - B_2)] \rho_{AB}\}
= \Tr[(A_1 + A_2)(\sigma_{1+} - \sigma_{1-})]
+ \Tr[(A_1 - A_2)(\sigma_{2+} - \sigma_{2-})],
\end{equation}
\end{widetext}
which bears already some formal similarity to the noncontextuality inequality from above. From $-\I_B \leq B_i \leq \I_B$ it follows that the eigenvalues of $B_i$ are from the interval $[-1, 1]$ and so we also have $\abs{B_i} \leq \I_B$. Define $\sigma_{i 0} = \Tr_B\{[\I_A \otimes (\I_B - \abs{B}_i)] \rho_{AB}\}$ and $\sigma_* = \Tr_B(\rho_{AB})$, then we have $\sigma_{1+} + \sigma_{1-} + \sigma_{10} = \sigma_* = \sigma_{2+} + \sigma_{2-} + \sigma_{20}$, which is going to play a crucial role in the formulation of the noncontextuality inequality. We obtain:
\begin{restatable}{proposition}{witnessToIneq} \label{prop:witnessToIneq}
Let $K$ be a set of allowed preparations. Let $\sigma_* \in K$ and let $i \in \{1,2\}$, let $\sigma_{i+}, \sigma_{i-}, \sigma_{i0} \in \cone(K)$ be subnormalized preparations such that
\begin{equation}
\label{eq:witnessToIneq-constraint}
\sigma_{1+} + \sigma_{1-} + \sigma_{10}
= \sigma_*
= \sigma_{2+} + \sigma_{2-} +
\sigma_{20}.
\end{equation}
Let $A_i$ be observables such that $-\I \leq A_i \leq \I$ for all $i \in \{1,2\}$. If there is a P\&M contextual model for $K$, then
\begin{equation} \label{eq:witnessToIneq-ineqNC}
\Tr[(A_1 + A_2)(\sigma_{1+} - \sigma_{1-})] + \Tr[(A_1 - A_2)(\sigma_{2+} - \sigma_{2-})] \leq 2.
\end{equation}
\end{restatable}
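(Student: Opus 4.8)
The plan is to reproduce, inside any P\&M noncontextual model for $K$, the elementary estimate behind the CHSH inequality. Assume such a model exists: there are operators $N_\lambda$ and states $\omega_\lambda\in\dens(\Ha)$ with $\Tr(\rho M_a)=\sum_\lambda\Tr(\rho N_\lambda)\Tr(\omega_\lambda M_a)$ for all $\rho\in K$ and all POVMs $\{M_a\}$, where $\Tr(\rho N_\lambda)\ge 0$ and $\sum_\lambda\Tr(\rho N_\lambda)=1$ on $K$, as in Eq.~\eqref{eq:contextuality-def}. Set $p(\lambda|\tau):=\Tr(\tau N_\lambda)$. By linearity these relations extend to the cone: if $\tau=\mu\tilde\sigma\in\cone(K)$ with $\mu\ge 0$ and $\tilde\sigma\in K$, then $p(\lambda|\tau)=\mu\,p(\lambda|\tilde\sigma)\ge 0$, $\sum_\lambda p(\lambda|\tau)=\mu$, and $\Tr(\tau M_a)=\sum_\lambda p(\lambda|\tau)\Tr(\omega_\lambda M_a)$.

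First I would turn each observable $A_i$ into a measurement. Since $-\I\le A_i\le\I$, the operators $\frac{1}{2}(\I+A_i)$ and $\frac{1}{2}(\I-A_i)$ are positive semidefinite and sum to $\I$, hence form a binary POVM. Writing out the noncontextual model for its two outcomes and subtracting gives, for every $\tau\in\cone(K)$,
\[
\Tr(\tau A_i)=\sum_\lambda p(\lambda|\tau)\,a_i(\lambda),\qquad a_i(\lambda):=\Tr(\omega_\lambda A_i).
\]
Two features matter: the \emph{same} state $\omega_\lambda$ appears for $i=1$ and $i=2$ (this is where P\&M noncontextuality enters), and $a_i(\lambda)\in[-1,1]$ because $\omega_\lambda$ is a state and $-\I\le A_i\le\I$.

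Next I would substitute this into the left-hand side of Eq.~\eqref{eq:witnessToIneq-ineqNC}, abbreviate $q_{i\pm}(\lambda):=p(\lambda|\sigma_{i\pm})$, $q_{i0}(\lambda):=p(\lambda|\sigma_{i0})$, $q_*(\lambda):=p(\lambda|\sigma_*)$, and expand by linearity to obtain
\[
\sum_\lambda\big[(a_1+a_2)(q_{1+}-q_{1-})+(a_1-a_2)(q_{2+}-q_{2-})\big],
\]
where every quantity is evaluated at $\lambda$. For fixed $\lambda$ one has $\abs{q_{i+}-q_{i-}}\le q_{i+}+q_{i-}\le q_*$: the first inequality holds because $q_{i\pm}\ge 0$, and the second because the constraint \eqref{eq:witnessToIneq-constraint} gives $q_{i+}+q_{i-}+q_{i0}=q_*$ with $q_{i0}\ge 0$ (here is where $\sigma_{i0}\in\cone(K)$ enters). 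Moreover $\abs{a_1+a_2}+\abs{a_1-a_2}=2\max(\abs{a_1},\abs{a_2})\le 2$ since $a_1(\lambda),a_2(\lambda)\in[-1,1]$. Hence the $\lambda$-th summand is at most $\abs{a_1+a_2}(q_{1+}+q_{1-})+\abs{a_1-a_2}(q_{2+}+q_{2-})\le(\abs{a_1+a_2}+\abs{a_1-a_2})\,q_*\le 2\,q_*(\lambda)$, and summing over $\lambda$ with $\sum_\lambda q_*(\lambda)=1$ (because $\sigma_*\in K$) yields Eq.~\eqref{eq:witnessToIneq-ineqNC}.

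The argument is short, and most of it is bookkeeping; I expect the genuine content to lie in two places. The first is the innocuous-looking step $q_{i0}\ge 0$, which is exactly what the hypothesis $\sigma_{i0}\in\cone(K)$ buys us: a noncontextual model is allowed to use $N_\lambda$ that are not positive semidefinite, so positivity of the weights is only guaranteed on genuinely allowed (possibly subnormalized) preparations. The second is matching the per-$\lambda$ budget: the linear constraint \eqref{eq:witnessToIneq-constraint} must be combined with $\abs{x+y}+\abs{x-y}\le 2$ precisely so that the bound on each summand sums to $2\sum_\lambda q_*(\lambda)=2$ rather than to something weaker.
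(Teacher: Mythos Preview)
Your proof is correct and follows essentially the same approach as the paper: decompose via the noncontextual model, bound each $\lambda$-term by a CHSH-type estimate using the constraint \eqref{eq:witnessToIneq-constraint}, and sum using the normalization of $\sigma_*\in K$. The only difference is cosmetic: where the paper regroups the summand by $A_1,A_2$ and performs a four-case sign analysis to obtain $\xi_\lambda\le 2\Tr(N_\lambda\sigma_*)$, you keep the $(A_1\pm A_2)$-grouping and invoke the identity $|a_1+a_2|+|a_1-a_2|=2\max(|a_1|,|a_2|)\le 2$, which is slightly cleaner but equivalent.
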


The proof of Proposition~\ref{prop:witnessToIneq} is given in the Appendix~\ref{appendix:prop-witnessToIneq}. The proof is significantly different from the proof of Proposition~\ref{prop:ineqToWitness}: There, we showed that Eq.~\eqref{eq:ineqToWitness-ineqSEP} is an entanglement witness because it was derived from noncontextuality inequality. In the proof of Proposition~\ref{prop:witnessToIneq} we use Eq.~\eqref{eq:witnessToIneq-correspondence} only as an educated guess and we have to prove that Eq.~\eqref{eq:witnessToIneq-ineqNC} is a noncontextuality inequality by showing that it holds whenever a P\&M noncontextual model exists.

In order to construct an explicit violation of the noncontextuality inequality we can consider the equivalent of the standard quantum violation of the CHSH inequality: let $\Ha$ be a qubit Hilbert space, $\dH = 2$, let $\sigma_x, \sigma_y, \sigma_z$ be the Pauli matrices and let
\begin{align}
&A_1 = \frac{1}{\sqrt{2}} (\sigma_x + \sigma_z),
&&A_2 = \frac{1}{\sqrt{2}} (\sigma_x - \sigma_z), \nonumber \\
&\sigma_{1+} = \frac{1}{2} \ketbra{+},
&&\sigma_{1-} = \frac{1}{2} \ketbra{-}, \label{eq:ineqToWitness-example} \\
&\sigma_{2+} = \frac{1}{2} \ketbra{0},
&&\sigma_{2-} = \frac{1}{2} \ketbra{1}, \nonumber
\end{align}
where $\ket{+}, \ket{-}$ and $\ket{0}, \ket{1}$ are the eigenbasis of the Pauli operators $\sigma_x, \sigma_z$ respectively. We have $\sigma_{1+} + \sigma_{1-} = \sigma_{2+} + \sigma_{2-}$ and so the constraint \eqref{eq:witnessToIneq-constraint} is satisfied. We get $\Tr((A_1 + A_2)(\sigma_{1+} - \sigma_{1-})) + \Tr((A_1 - A_2)(\sigma_{2+} - \sigma_{2-})) = 2 \sqrt{2}$, hence the inequality \eqref{eq:witnessToIneq-ineqNC} is violated.

Let us note that it was shown in Ref.~\cite{SchmidSelbyWolfeKunjwalSpekkens-noncontextuality} that the stabilizer rebit theory, whose state space consists of the convex combinations of the states $\ketbra{0}$, $\ketbra{1}$, $\ketbra{+}$, $\ketbra{-}$, has a P\&M noncontextual model which may seem to contradict the violation of the noncontextuality inequality \eqref{eq:witnessToIneq-ineqNC}. However, there is no contradiction because the observables $A_1$ and $A_2$ are not included in the stabilizer rebit theory.

%%%%%%%%%%%%%%%%%%%%%%%%%%%%%%%%%%%%%%%%%%%%%%%%%%
\paragraph{Applications in quantum key distribution}
%%%%%%%%%%%%%%%%%%%%%%%%%%%%%%%%%%%%%%%%%%%%%%%%%%
Our results provide an insight into why entanglement is a precondition for secure quantum key distribution \cite{CurtyLewensteinLu-entanglementForQKD}. This is a known result, but we will show that it can be understood in terms of noncontextuality by employing our result. Consider an entanglement-based protocol for quantum key distribution, let $\rho_{AB}$ be the state shared between Alice and Bob and let $M_{a|x}$ be the measurements available to Alice and $N_{b|y}$ be the measurements available to Bob. In line with \cite{CurtyLewensteinLu-entanglementForQKD} we say that $M_{a|x} \otimes N_{b|y}$ do not witness the entanglement of $\rho_{AB}$ if for any entanglement witness of the form $W = \sum_{a,b,x,y} c_{abxy} M_{a|x} \otimes N_{b|y}$, that is for every operator $W$ of this form such that $\Tr(W \sigma_{AB}) \geq 0$ for all separable states $\sigma_{AB}$, we also have that $\Tr(W \rho_{AB}) \geq 0$. It was observed in \cite{CurtyLewensteinLu-entanglementForQKD} that if $M_{a|x} \otimes N_{b|y}$ do not witness the entanglement of $\rho_{AB}$, then there is a separable state $\xi_{AB}$ such that $\Tr[(M_{a|x} \otimes N_{b|y}) \rho_{AB}] = \Tr[(M_{a|x} \otimes N_{b|y}) \xi_{AB}]$.

We will interpret Bob's measurements on $\rho_{AB}$ as remote preparations, thus Bob will be sending to Alice states $\sigma_{a|x}$ proportional to $\Tr_B[(\I \otimes M_{a|x}) \rho_{AB}]$. If we assume that $M_{a|x} \otimes N_{b|y}$ do not witness the entanglement of $\rho_{AB}$, then we can replace $\rho_{AB}$ with the respective separable state $\xi_{AB}$ and it follows from Theorem~\ref{thm:results-SEPtoRobustNC} that there exists P\&M noncontextual model for the set $\Lambda_A[(1-\varepsilon) \xi_{AB} + \varepsilon \tau_{AB}]$. Then in any prepare and measure quantum key distribution protocol where Bob sends to Alice states from the set $\Lambda_A[(1-\varepsilon) \xi_{AB} + \varepsilon \tau_{AB}]$ for arbitrary small $\varepsilon > 0$, the eavesdropper can then map the state prepared by Bob to the hidden variables, broadcast the hidden variables, and map one copy to the respective state that is then send to Alice while using the other to obtain complete information about the secret key. It follows that quantum key distribution is not possible in this scenario, and thus it is also not possible with the state $\rho_{AB}$ and measurements $M_{a|x}$ and $N_{b|y}$ respectively.

The fact that a state is entangled does not guarantee that it is useful for quantum key distribution in entanglement based protocols and, analogically, the fact that $K$ is contextual does not guarantee that it is useful for quantum key distribution in prepare and measure protocols. But it is known that sufficient violation of a Bell inequality, such as the maximal violation of the CHSH inequality, is sufficient for quantum key distribution. Using the standard connection between entanglement based and prepare and measure protocols for quantum key distribution \cite{BennettBrassardMermin-QKD,ScaraniBechmannpasquinucciCerfDusekLutkenhausPeev-QKD,FerencziLutkenhaus-QKD,ColesMetodievLutkenhaus-numericalQKD}, it follows that sufficient violation of the respective noncontextuality inequality enables quantum key distribution. We leave further investigation of this topic for future research.

%%%%%%%%%%%%%%%%%%%%%%%%%%%%%%%%%%%%%%%%%%%%%%%%%%
\paragraph{Conclusions}
%%%%%%%%%%%%%%%%%%%%%%%%%%%%%%%%%%%%%%%%%%%%%%%%%%
Our main results, Theorems \ref{thm:results-NCtoSEP} and \ref{thm:results-SEPtoRobustNC}, prove that contextuality is a precondition for entanglement and that only entangled states allow for robust remote preparations of contextuality. We have used these results to map noncontextuality inequalities to a class of entanglement witnesses in Proposition~\ref{prop:ineqToWitness} and to design a class of noncontextuality inequalities using entanglement witnesses in Proposition~\ref{prop:witnessToIneq}. We have also discussed potential future applications of our results in quantum key distribution.

As a consequence of our results any experiment which verifies entanglement of a state (e.g., by observing quantum steering or violation of a Bell inequality) immediately verifies P\&M contextuality of the induced system. Moreover our results open the path to further transport of results between entanglement and contextuality, one can use our result to design noncontextual inequalities that can be used to verify parameters of an experimental setup \cite{PlavalaGuhne-NCineqFromBell}, and it is also possible to take a contextuality-enabled task and transform it into a remote entanglement-enabled task. Thus our results provide a blueprint for connecting the resource theory of entanglement and contextuality.

Note added: after submitting this manuscript for publication, a paper connecting contextuality and Bell non-locality in a similar manner was made public on arXiv and subsequently published \cite{WrightFarkas-contextuality}, where also the decidability of the membership problem of the set of quantum contextual behaviours is discussed.

%\begin{acknowledgments}
\paragraph{Acknowledgments}
We thank Carlos de Gois, Benjamin Yadin, and Matthias Kleinmann for providing feedback on readability of the manuscript.

We acknowledge support from the Deutsche Forschungsgemeinschaft (DFG, German Research Foundation, project numbers 447948357 and 440958198), the Sino-German Center for Research Promotion (Project M-0294), the ERC (Consolidator Grant 683107/TempoQ), the German Ministry of Education and Research (Project QuKuK, BMBF Grant No. 16KIS1618K), and the Alexander von Humboldt Foundation.
%\end{acknowledgments}

\onecolumngrid
\appendix

%%%%%%%%%%%%%%%%%%%%%%%%%%%%%%%%%%%%%%%%%%%%%%%%%%
\section{Proof of Theorem~\ref*{thm:results-NCtoSEP}} \label{appendix:thm-results-NCtoSEP} 
%%%%%%%%%%%%%%%%%%%%%%%%%%%%%%%%%%%%%%%%%%%%%%%%%%
\resultsNCtoSEP*
\begin{proof}
Let $\sigma_A \in \Lambda_A(\rho_{AB})$ and let $\oM = \{M_a\}$ be a POVM on $\Ha_A$. According to our assumptions there exists P\&M noncontextual hidden variable model for $\Lambda_A(\rho_{AB})$, thus we have $\Tr( \sigma_A M_a) = \sum_\lambda \Tr(\sigma_A N_\lambda) \Tr(\omega_\lambda M_a)$ for some $\omega_\lambda \in \dens(\Ha_A)$ and $N_\lambda \in \bound(\Ha_A)$ such that $\Tr(\sigma_A' N_\lambda) \geq 0$ for all $\sigma_A' \in \Lambda_A(\rho_{AB})$. Since $\sigma_A \in \Lambda_A(\rho_{AB})$ then there is some $E_B \geq 0$ such that $\sigma_A = \Tr_B[(\I_A \otimes E_B) \rho_{AB}]$ and we have $\Tr(\sigma_A M_a) = \Tr[(M_a \otimes E_B) \rho_{AB}]$ and $\Tr(\sigma_A N_\lambda) = \Tr[(N_\lambda \otimes E_B) \rho_{AB}]$. Moreover we will use that
$\Tr[(N_\lambda \otimes E_B) \rho_{AB}] = \Tr \{ E_B \Tr_A [ (N_\lambda \otimes \I_B) \rho_{AB} ] \}$. Putting everything together we get
\begin{equation}
\begin{split}
\Tr[(M_a \otimes E_B) \rho_{AB}] &= \sum_\lambda \Tr[(N_\lambda \otimes E_B) \rho_{AB}] \Tr(\omega_\lambda M_a) \\
&= \sum_\lambda \Tr [ ( M_a \otimes N_\lambda \otimes E_B ) ( \omega_\lambda \otimes \rho_{AB} ) ] \\
&= \sum_\lambda \Tr [ ( M_a \otimes E_B ) ( \omega_\lambda \otimes \Tr_A [( N_\lambda \otimes \I_B) \rho_{AB}] ) ].
\end{split}
\end{equation}
Since this holds for every $M_a$ and $E_B$ we must have $\rho_{AB} = \sum_\lambda \omega_\lambda \otimes \Tr_A [( N_\lambda \otimes \I_B) \rho_{AB}]$. In order to prove that $\rho_{AB}$ is separable we only need to prove that $\Tr_A [( N_\lambda \otimes \I_B) \rho_{AB}]$ is positive semidefinite for all $\lambda$, but this is straightforward as for any positive semidefinite $F_B \in \bound(\Ha_B)$ we have
\begin{equation}
\Tr \{ F_B \Tr_A [( N_\lambda \otimes \I_B) \rho_{AB}] \} = \Tr \{ N_\lambda \Tr_B [( \I_A \otimes F_B) \rho_{AB}] \} = \mu \Tr(N_\lambda \sigma_A') \geq 0
\end{equation}
where $\sigma_A' \in \Lambda_A(\rho_{AB})$ and $\mu \geq 0$ are such that $\Tr_B [( \I_A \otimes F_B) \rho_{AB}] = \mu \sigma_A'$.
\end{proof}

%%%%%%%%%%%%%%%%%%%%%%%%%%%%%%%%%%%%%%%%%%%%%%%%%%
\section{Proof of Theorem~\ref*{thm:results-SEPtoNC}} \label{appendix:thm-results-SEPtoNC}
%%%%%%%%%%%%%%%%%%%%%%%%%%%%%%%%%%%%%%%%%%%%%%%%%%
Before proceeding to the proof of Theorem~2 we need to introduce two superoperators. Let $\rho_{AB} \in \dens(\Ha_A \otimes \Ha_B)$ be a bipartite state, then the superoperator $\Psi_{A \to B}^{\rho}: \bound(\Ha_A) \to \bound(\Ha_B)$ is defined as follows: let $X_A \in \bound(\Ha_A)$, then
\begin{equation}
\Psi_{A \to B}^{\rho}(X_A) = \Tr_A[(X_A \otimes \I_B) \rho_{AB}].
\end{equation}
The superoperator $\Psi_{B \to A}^{\rho}: \bound(\Ha_B) \to \bound(\Ha_A)$ is defined analogically. Moreover we can express the remotely preparable sets using the superoperator $\Psi_{B \to A}^{\rho}$ as $\Lambda_A(\rho_{AB}) = \{ \sigma_A \in \dens(\Ha_A): \: \sigma_A = \Psi_{B \to A}^{\rho}(E_B), E_B \geq 0 \}$.

Let $\{X_i^A\}_{i=1}^{\dim(\Ha_A)^2}$ and $\{Y_j^B\}_{j=1}^{\dim(\Ha_B)^2}$ be an orthonormal operator basis of $B(\Ha_A)$ and $B(\Ha_B)$ respectively, i.e., $\Tr(X_i^A X_j^A) = \Tr(Y_i^B Y_j^B) = \delta_{ij}$, then the basis can be chosen such that $\rho_{AB} = \sum_{i=1}^N p_i X_i^A \otimes Y_i^B$, where $N = \max[\dim(\Ha_A)^2, \dim(\Ha_B)^2]$ and $p_i \geq 0$, this is essentially Schmidt decomposition applied to $\rho_{AB}$ as a vector in $\bound(\Ha_A) \otimes \bound(\Ha_B)$. Let $I_+$ be the index set defined as $I_+ = \{ i: p_i > 0\}$. We then define the superoperator $\Pi_A^\rho: \bound(\Ha_A) \to \bound(\Ha_A)$ as follows:
\begin{equation} \label{eq:Pi-def}
\Pi_A^\rho (\sum_{i=1}^{\dim(\Ha_A)^2} \alpha_i X_i^A ) = \sum_{i \in I_+} \alpha_i X_i^A.
\end{equation}
$\Pi_B^\rho: \bound(\Ha_B) \to \bound(\Ha_B)$ is defined analogically.

The superoperators $\Pi_A^\rho$ and $\Pi_B^\rho$ will appear in our calculations for the following reason: the superoperator $\Psi_{B \to A}^{\rho}$ is not necessarily invertible, but we can invert it on its support. $\Pi_A^\rho$ and $\Pi_B^\rho$ are projections on the support of $\Psi_{B \to A}^{\rho}$ and so they will appear because we will use the pseudo-inverse of $\Psi_{B \to A}^{\rho}$.

\begin{lemma*}
Let $\rho_{AB}$ be a separable bipartite quantum state with the decomposition $\rho_{AB} = \sum_\lambda \omega_\lambda \otimes K_\lambda$, where $\omega_\lambda \geq 0$, $\Tr(\omega_\lambda) = 1$ and $K_\lambda \geq 0$. Assume that we have $\Pi_B^\rho(K_\lambda) \geq 0$ for all $\lambda$. Then there exists P\&M noncontextual model for $\Lambda_A(\rho_{AB})$.
\end{lemma*}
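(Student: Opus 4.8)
The plan is to realize the noncontextual model using the \emph{same} states $\omega_\lambda$ that appear in the separable decomposition as the post-measurement states $\omega_\lambda$ of the model, and to build operators $N_\lambda \in \bound(\Ha_A)$ by pulling the $K_\lambda$ back through a pseudo-inverse of $\Psi_{B\to A}^\rho$. The target is an $N_\lambda$ with $\Tr(\sigma_A N_\lambda) = \Tr(E_B K_\lambda)$ for every realization $\sigma_A = \Tr_B[(\I_A \otimes E_B)\rho_{AB}]$ of an element of $\Lambda_A(\rho_{AB})$ with $E_B \geq 0$. Granting such $N_\lambda$, positivity is immediate from $E_B, K_\lambda \geq 0$, normalization follows from $\sum_\lambda K_\lambda = \Tr_A(\rho_{AB})$, and the defining identity \eqref{eq:contextuality-def} follows from $\sum_\lambda \omega_\lambda \otimes K_\lambda = \rho_{AB}$.

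The first step is to reduce to the situation $\Pi_B^\rho(K_\lambda) = K_\lambda$ for all $\lambda$. Because $\rho_{AB} = \sum_{i \in I_+} p_i\, X_i^A \otimes Y_i^B$ and $\Pi_B^\rho$ fixes each $Y_i^B$ with $i\in I_+$, we have $(\mathrm{id}_A \otimes \Pi_B^\rho)(\rho_{AB}) = \rho_{AB}$. Applying $\mathrm{id}_A \otimes \Pi_B^\rho$ to the given decomposition yields $\rho_{AB} = \sum_\lambda \omega_\lambda \otimes \Pi_B^\rho(K_\lambda)$, which, by the hypothesis $\Pi_B^\rho(K_\lambda) \geq 0$, is again a valid separable decomposition of $\rho_{AB}$ with the same $\omega_\lambda$. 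Hence we may from now on assume $K_\lambda \in \mathrm{range}\,\Pi_B^\rho$; note that after this replacement $\sum_\lambda K_\lambda = \Tr_A(\rho_{AB})$ still holds, as both sides already lie in $\mathrm{range}\,\Pi_B^\rho$.

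Next I would define $N_\lambda \in \bound(\Ha_A)$ by declaring $\Tr\big(\Psi_{B\to A}^\rho(E_B)\, N_\lambda\big) = \Tr(E_B K_\lambda)$ for all $E_B \in \bound(\Ha_B)$; concretely one can take $N_\lambda$ to be the pull-back of $K_\lambda$ through the pseudo-inverse of $\Psi_{B\to A}^\rho$, recalling that $\Psi_{B\to A}^\rho$ and $\Psi_{A\to B}^\rho$ are mutually Hilbert--Schmidt adjoint and that $\ker\Psi_{B\to A}^\rho = (\mathrm{range}\,\Pi_B^\rho)^\perp$. The one point requiring care --- and, I expect, the only real obstacle --- is \emph{well-definedness}: if $\Psi_{B\to A}^\rho(E_B) = \Psi_{B\to A}^\rho(E_B')$ then $E_B - E_B' \in \ker\Psi_{B\to A}^\rho = (\mathrm{range}\,\Pi_B^\rho)^\perp$, which is orthogonal to $K_\lambda$ precisely because we arranged $K_\lambda \in \mathrm{range}\,\Pi_B^\rho$; hence $\Tr(E_B K_\lambda) = \Tr(E_B' K_\lambda)$ and $\Tr(\sigma_A N_\lambda)$ is unambiguous. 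This is exactly where the hypothesis $\Pi_B^\rho(K_\lambda)\geq 0$ is consumed.

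It remains to verify the three defining properties of a P\&M noncontextual model for $\Lambda_A(\rho_{AB})$. Writing a generic $\sigma_A \in \Lambda_A(\rho_{AB})$ as $\Tr_B[(\I_A \otimes E_B)\rho_{AB}]$ with $E_B \geq 0$, we get $\Tr(\sigma_A N_\lambda) = \Tr(E_B K_\lambda) \geq 0$ (positivity) and $\sum_\lambda \Tr(\sigma_A N_\lambda) = \Tr\big(E_B \Tr_A(\rho_{AB})\big) = \Tr[(\I_A \otimes E_B)\rho_{AB}] = \Tr(\sigma_A) = 1$ (normalization); and for any POVM $\{M_a\}$, using $\Tr(\sigma_A N_\lambda) = \Tr(E_B K_\lambda)$ and $\rho_{AB} = \sum_\lambda \omega_\lambda \otimes K_\lambda$,
\[
\sum_\lambda \Tr(\sigma_A N_\lambda)\,\Tr(\omega_\lambda M_a) = \Tr[(M_a \otimes E_B)\rho_{AB}] = \Tr(\sigma_A M_a),
\]
which, with $\omega_\lambda \in \dens(\Ha_A)$, is exactly the model identity. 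This proves the lemma; the superoperator bookkeeping (the $\Pi_B^\rho$ reduction and the pseudo-inverse) is the technical core, while the closing computation is routine.
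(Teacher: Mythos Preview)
Your proof is correct and follows essentially the same route as the paper: both define $N_\lambda$ as the pull-back of $K_\lambda$ through the Moore--Penrose pseudo-inverse of $\Psi_{B\to A}^\rho$ (the paper writes this explicitly as $N_\lambda=\Phi^*(K_\lambda)$ in the Schmidt basis) and then verify the model identity, positivity, and normalization. The only organisational difference is that you first replace $K_\lambda$ by $\Pi_B^\rho(K_\lambda)$ and then argue well-definedness via $\ker\Psi_{B\to A}^\rho=(\mathrm{range}\,\Pi_B^\rho)^\perp$, whereas the paper keeps $K_\lambda$ unchanged and instead carries $\Pi_B^\rho(E_B)$ through the calculation, invoking the self-adjointness $\Tr[\Pi_B^\rho(E_B)K_\lambda]=\Tr[E_B\,\Pi_B^\rho(K_\lambda)]$ at the positivity step; the two are equivalent bookkeepings of the same pseudo-inverse construction.
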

\begin{proof}
Let $\{X_i^A\}_{i=1}^{\dim(\Ha_A)^2}$ and $\{Y_j^B\}_{j=1}^{\dim(\Ha_B)^2}$ be orthonormal operator basis of $B(\Ha_A)$ and $B(\Ha_B)$ respectively such that $\rho_{AB} = \sum_{i=1}^N p_i X_i^A \otimes Y_i^B$, where $N = \max[\dim(\Ha_A)^2, \dim(\Ha_B)^2]$ and $p_i \geq 0$. Let $\sigma_A \in \Lambda_A(\rho_{AB})$ and let $E_B \in \bound(\Ha_B)$, $E_B \geq 0$, be the corresponding operator such that $\sigma_A = \Tr_B[(\I_A \otimes E_B) \rho_{AB}]$. Then $E_B = \sum_{i=1}^{\dim(\Ha_B)^2} \beta_i Y_i^B$ and we have
\begin{equation}
\sigma_A = \Tr_B[(\I_A \otimes E_B) \rho_{AB}] = \Psi_{B \to A}^{\rho}(E_B) = \sum_{i \in I_+} \beta_i p_i X_i^A.
\end{equation}
We will now define a superoperator $\Phi: \bound(\Ha_A) \to \bound(\Ha_B)$ that will act as the pseudo-inverse to the superoperator $\Psi_{B \to A}^{\rho}$ as follows: let $F_A \in \bound(\Ha_A)$, then $F_A = \sum_{i=1}^{\dim(\Ha_A)^2} \alpha_i X_i^A$ and we define $\Phi(F_A) = \sum_{i \in I_+} \frac{\alpha_i}{p_i} Y_i^B$. We clearly have
\begin{equation} \label{eq:results-SEPtoNC-pseudoInverse}
(\Phi \circ \Psi_{B \to A}^{\rho})(E_B) = \Pi_B^\rho(E_B),
\end{equation}
where $\circ$ denotes the concatenation of superoperators. Moreover it follows that $\Psi_{B \to A}^{\rho}(E_B) = (\Psi_{B \to A}^{\rho} \circ \Pi_B^\rho)(E_B)$. Let $\oM = \{M_a\}$ be a POVM, then we have
\begin{equation}
\Tr(\sigma_A M_a) = \Tr[(M_a \otimes E_B) \rho_{AB}]
= \Tr\{[M_a \otimes \Pi_B^\rho(E_B)] \rho_{AB}\}
= \sum_\lambda \Tr(\omega_\lambda M_a) \Tr[\Pi_B^\rho(E_B) K_\lambda].
\end{equation}
Using Eq.~\eqref{eq:results-SEPtoNC-pseudoInverse} we get
\begin{equation}
\Tr(\sigma_A M_a) = \sum_\lambda \Tr[\Psi_{B \to A}^{\rho}(E_B) \Phi^*(K_\lambda)] \Tr(\omega_\lambda M_a)
= \sum_\lambda \Tr[\sigma_A \Phi^*(K_\lambda)] \Tr(\omega_\lambda M_a)
\end{equation}
where $\Phi^*$ is the adjoint superoperator to $\Phi$. Denoting $\Phi^*(K_\lambda) = N_\lambda$ we get
\begin{equation}
\Tr(\sigma_A M_a) = \sum_\lambda \Tr(\sigma_A N_\lambda) \Tr(\omega_\lambda M_a)
\end{equation}
which is the desired result, we only need to check that $N_\lambda$ satisfies the positivity and normalization conditions.

Let $\sigma_A \in \Lambda_A(\rho_{AB})$ be given as $\sigma_A = \Psi_{B \to A}^{\rho}(E_B)$ for some $E_B \in \bound(\Ha_B)$, $E_B \geq 0$, then we have
\begin{equation}
\Tr(\sigma_A N_\lambda) = \Tr[\Phi(\sigma_A) K_\lambda] = \Tr[\Pi_B^\rho(E_B) K_\lambda] = \Tr[E_B \Pi_B^\rho(K_\lambda)] \geq 0
\end{equation}
since $\Pi_B^\rho(K_\lambda) \geq 0$ according to the assumptions of the theorem. To check normalization note that due to the adjoint version of Eq.~\eqref{eq:results-SEPtoNC-pseudoInverse} we have that $\Phi^*$ is the pseudo-inverse of the superoperator $\Psi_{A \to B}^{\rho}$. Using $\sum_\lambda K_\lambda = \Tr_A( \rho_{AB}) = \Tr_A[(\I_A \otimes \I_B) \rho_{AB}]$ we get that $\Phi^*(\sum_\lambda K_\lambda) = \Pi_A^\rho(\I_A)$ and so we have $\sum_\lambda N_\lambda = \Phi^*(\sum_\lambda K_\lambda) = \Pi_A^\rho(\I_A)$. It follows that $\sum_\lambda \Tr(\sigma_A N_\lambda) = \Tr[\sigma_A \Pi_A^\rho(\I)] = \Tr(\sigma_A) = 1$ because $\Pi_A^\rho(\sigma_A) = \sigma_A$.
\end{proof}

\resultsSEPtoNC*
\begin{proof}
The result follows from the previous lemma. Since for every $\sigma \in \Lambda_B(\rho_{AB})$ we have $\Pi^\rho_B(\sigma) = \sigma$, it follows that if $K_\lambda$ belongs to the linear hull of of $\Lambda_B(\rho_{AB})$ then we have $\Pi^\rho_B(K_\lambda) = K_\lambda \geq 0$.
\end{proof}

%%%%%%%%%%%%%%%%%%%%%%%%%%%%%%%%%%%%%%%%%%%%%%%%%%
\section{Proof of Theorem~\ref*{thm:results-SEPtoRobustNC}} \label{appendix:thm-results-SEPtoRobustNC}
%%%%%%%%%%%%%%%%%%%%%%%%%%%%%%%%%%%%%%%%%%%%%%%%%%
\resultsSEPtoRobustNC*
\begin{proof}
We will denote $n = \dim(\Ha_A) = \dim(\Ha_B)$. And let $\tau_{AB} \in \dens(\Ha_A \otimes \Ha_B)$ be a randomly selected separable state. Since $\dim(\Ha_A) = \dim(\Ha_B) = n$, we can represent the superoperators $\Psi_{B \to A}^\rho$ and $\Psi_{B \to A}^\tau$ by $n^2 \times n^2$ matrices $M(\Psi_{B \to A}^\rho)$ and $M(\Psi_{B \to A}^\tau)$. Moreover since $\tau_{AB}$ is randomly selected, we have $\det[M(\Psi_{B \to A}^\tau)] \neq 0$, since almost all matrices have non-zero determinants.

It follows that the matrix corresponding to the superoperator $\Psi_{B \to A}^{(1-\varepsilon) \rho + \varepsilon \tau}$ is $(1-\varepsilon) M(\Psi_{B \to A}^\rho) + \varepsilon M(\Psi_{B \to A}^\tau)$. In order to finish the proof we want to show that there is some $\delta \in (0,1)$ such that for all $\varepsilon \in (0,\delta)$ we have $\det[(1-\varepsilon) M(\Psi_{B \to A}^\rho) + \varepsilon M(\Psi_{B \to A}^\tau)] \neq 0$, then it follows that the superoperator $\Psi_{B \to A}^{(1-\varepsilon) \rho_{AB} + \varepsilon \tau_{AB}}$ is invertible.

We have that $\det[(1-\varepsilon) M(\Psi_{B \to A}^\rho) + \varepsilon M(\Psi_{B \to A}^\tau)]$ is either constant in $\varepsilon$ or a polynomial of finite order. Since the superoperator $\Psi_{B \to A}^\tau$ is invertible we must have $\det[M(\Psi_{B \to A}^\tau)] \neq 0$. Therefore if $\det[(1-\varepsilon) M(\Psi_{B \to A}^\rho) + \varepsilon M(\Psi_{B \to A}^\tau)]$ is constant, then $\det[(1-\varepsilon) M(\Psi_{B \to A}^\rho) + \varepsilon M(\Psi_{B \to A}^\tau)] \neq 0$ for all $\varepsilon \in [0,1]$ and the result follows. If $\det[(1-\varepsilon) M(\Psi_{B \to A}^\rho) + \varepsilon M(\Psi_{B \to A}^\tau)]$ is not constant, then let $\delta$ be the smallest root of $\det[(1-\varepsilon) M(\Psi_{B \to A}^\rho) + \varepsilon M(\Psi_{B \to A}^\tau)]$ from the interval $(0,1)$. It follows that for all $\varepsilon \in (0, \delta)$ we have $\det[(1-\varepsilon) M(\Psi_{B \to A}^\rho) + \varepsilon M(\Psi_{B \to A}^\tau)] \neq 0$ and thus the corresponding map is invertible.
\end{proof}

One can construct suitable $\tau_{AB}$ explicitly as follows: Let $\frac{\I_A}{\sqrt{n}}, \tilde{X}_i^A \in \bound(\Ha_A)$ and $\frac{\I_B}{\sqrt{n}}, \tilde{Y}_i^B \in \bound(\Ha_B)$ be orthonormal basis, i.e., $\Tr(\tilde{X}_i^A) = \Tr(\tilde{Y}_i^B) = 0$ and $\Tr(\tilde{X}_i^A \tilde{X}_j^A) = \Tr(\tilde{Y}_i^A \tilde{Y}_j^B) = \delta_{ij}$. Note that these are different basis than the Schmidt basis used in previous constructions. We then define $\tau_{AB} \in \bound(\Ha_A \otimes \Ha_B)$ as $\tau_{AB} = \frac{\I_A \otimes \I_B}{n} + \mu \sum_i \tilde{X}_i^A \otimes \tilde{Y}_i^B$, where $\mu > 0$. It follows that we can always choose $\mu$ so small that $\tau_{AB} \geq 0$, moreover such that $\tau_{AB}$ is a separable state. Let $E_B \in \bound(\Ha_B)$, then $E_B = \beta_1 \frac{\I_B}{\sqrt{n}} + \sum_i \beta_i \tilde{Y}_i^B$ and we have $\Tr_B((\I_A \otimes E_B) \tau_{AB}) = \beta_1 \frac{\I_A}{\sqrt{n}} + \mu \sum_i \beta_i \tilde{X}_i^B$, from which it follows that the superoperator $\Psi_{B \to A}^{\tau}$ is invertible. It then follows that $\Psi_{B \to A}^{(1-\varepsilon) \rho + \varepsilon \tau}$ is also invertible for $\varepsilon \in (0, \delta)$ for suitable choice of $\delta$.

%%%%%%%%%%%%%%%%%%%%%%%%%%%%%%%%%%%%%%%%%%%%%%%%%%
\section{Proof of Proposition~\ref*{prop:ineqToWitness} and explicit construction of an entanglement witness} \label{appendix:prop-ineqToWitness}
%%%%%%%%%%%%%%%%%%%%%%%%%%%%%%%%%%%%%%%%%%%%%%%%%%
The following result is a restatement of a known noncontextuality inequality \cite{MazurekPuseyKunjwalReschSpekkens-noncontextuality}, our only modification is that we allow for unnormalized states.
\begin{proposition*}
Let $K \subset \dens(\Ha)$ be the set of allowed preparations and let $\cone(K)$ denote the set of all unnormalized allowed preparations, that is all operators of the form $\mu \tilde{\sigma}$, where $\mu \in \RR$, $\mu \geq 0$ and $\tilde{\sigma} \in K$. Let $\sigma_{t,b} \in \cone(K)$ for $t \in \{1,2,3\}$ and $b \in \{0,1\}$ be such that $\sigma_* = \frac{1}{2} (\sigma_{t,0} + \sigma_{t,1}) = \frac{1}{2} (\sigma_{t',0} + \sigma_{t',1})$ for all $t,t' \in \{1,2,3\}$ and let $M_{t,b} \in \bound(\Ha)$ be positive operators, $M_{t,b} \geq 0$, such that $\frac{1}{3} \sum_{t=1}^3 M_{t,b} = \frac{\I}{2}$ and $M_{t,0} + M_{t,1} = \I$ for all $t$. If there is preparation and measurement noncontextual hidden variable model for $K$, then we have
\begin{equation} \label{eq:ineqToWitness-ineqNC}
\sum_{t=1}^3 \sum_{b=0}^1 \Tr(\sigma_{t,b} M_{t,b}) \leq 5 \Tr(\sigma_*).
\end{equation}
\end{proposition*}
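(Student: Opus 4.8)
The plan is to unpack the definition of a P\&M noncontextual model, reduce Eq.~\eqref{eq:ineqToWitness-ineqNC} to a family of elementary inequalities indexed by the hidden variable $\lambda$, and then use the two structural constraints --- that the preparations $\sigma_{t,b}$ all average to $\sigma_*$, and that the three binary POVMs $\{M_{t,0},M_{t,1}\}$ average to the trivial measurement $\frac{\I}{2}$ --- to sharpen the naive bound $6\Tr(\sigma_*)$ down to $5\Tr(\sigma_*)$.

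First I would transport the noncontextual model from $K$ to $\cone(K)$: writing $\sigma\in\cone(K)$ as $\sigma=\mu\tilde\sigma$ with $\mu\ge0$ and $\tilde\sigma\in K$, the operators $N_\lambda$ and states $\omega_\lambda$ provided by the model give $\Tr(\sigma M_a)=\sum_\lambda\Tr(\sigma N_\lambda)\Tr(\omega_\lambda M_a)$, where $p(\lambda|\sigma):=\Tr(\sigma N_\lambda)\ge0$ and $\sum_\lambda p(\lambda|\sigma)=\Tr(\sigma)$ by Eq.~\eqref{eq:contextuality-positivityNormalization}; note $\sigma_*\in\cone(K)$ since $\cone(K)$ is a convex cone. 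Applying this to the binary POVMs $\{M_{t,0},M_{t,1}\}$ (these are POVMs since $M_{t,b}\ge0$ and $M_{t,0}+M_{t,1}=\I$), the left-hand side of Eq.~\eqref{eq:ineqToWitness-ineqNC} becomes $\sum_\lambda S_\lambda$ with $S_\lambda=\sum_{t=1}^3\sum_{b=0}^1 p(\lambda|\sigma_{t,b})\,\xi_t(b|\lambda)$, where $\xi_t(b|\lambda):=\Tr(\omega_\lambda M_{t,b})$ is, for each $t$ and $\lambda$, a probability distribution over $b$.

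The heart of the argument is the pointwise estimate $S_\lambda\le 5\,p(\lambda|\sigma_*)$. Put $q=p(\lambda|\sigma_*)\ge0$. The constraint $\sigma_*=\frac12(\sigma_{t,0}+\sigma_{t,1})$ gives $p(\lambda|\sigma_{t,0})+p(\lambda|\sigma_{t,1})=2q$, so with $a_t:=p(\lambda|\sigma_{t,0})\in[0,2q]$ and $x_t:=\xi_t(0|\lambda)\in[0,1]$ a short computation (substituting $p(\lambda|\sigma_{t,1})=2q-a_t$ and $\xi_t(1|\lambda)=1-x_t$) gives $S_\lambda=3q+\sum_{t=1}^3(a_t-q)(2x_t-1)$. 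Writing $u_t:=a_t-q\in[-q,q]$ and $v_t:=2x_t-1\in[-1,1]$, the measurement constraint $\frac13\sum_t M_{t,0}=\frac{\I}{2}$ forces $\sum_t x_t=\frac32$ and hence $\sum_t v_t=0$. Then $\sum_t u_tv_t\le q\sum_t|v_t|$, and for three reals in $[-1,1]$ summing to zero one has $\sum_t|v_t|\le2$ (a brief case check on their sign pattern), so $S_\lambda\le 3q+2q=5q$.

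Summing over $\lambda$ and using $\sum_\lambda p(\lambda|\sigma_*)=\Tr(\sigma_*)$ yields $\sum_{t,b}\Tr(\sigma_{t,b}M_{t,b})=\sum_\lambda S_\lambda\le 5\Tr(\sigma_*)$, as claimed. I expect the only subtle point to be the pointwise estimate: dropping the measurement-averaging constraint leaves only $\sum_t|v_t|\le3$ and the weaker bound $6\Tr(\sigma_*)$, so the key realization is that $\sum_t v_t=0$ is precisely what upgrades the constant from $6$ to $5$.
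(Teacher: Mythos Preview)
Your proof is correct and follows essentially the same route as the paper's. Both arguments expand the left-hand side via the noncontextual model, work pointwise in $\lambda$, use the preparation constraint $\sigma_{t,0}+\sigma_{t,1}=2\sigma_*$ to pull out a common factor $p(\lambda|\sigma_*)$, and then invoke the measurement-averaging constraint to upgrade the naive bound $6$ to $5$. Your elementary fact $\sum_t|v_t|\le2$ for $v_t\in[-1,1]$ with $\sum_t v_t=0$ is exactly equivalent to the paper's cited ``standard step'' $\frac{1}{3}\sum_t\max_b\Tr(\omega_\lambda M_{t,b})\le\frac{5}{6}$ (indeed $\max_b\Tr(\omega_\lambda M_{t,b})=\tfrac12+\tfrac12|v_t|$), so the only real difference is that you derive this bound explicitly whereas the paper defers to the original references.
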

\begin{proof}
We get
\begin{equation}
\begin{split}
\sum_{t=1}^3 \sum_{b=0}^1 \Tr(\sigma_{t,b} M_{t,b}) & = \sum_{t=1}^3 \sum_{b=0}^1 \sum_\lambda \Tr(\sigma_{t,b} N_\lambda) \Tr( \omega_\lambda M_{t,b})
\leq \sum_{t=1}^3 \sum_\lambda \max_b[\Tr( \omega_\lambda M_{t,b})] \sum_{b=0}^1 \Tr(\sigma_{t,b} N_\lambda) \\
 & = 2 \sum_{t=1}^3 \sum_\lambda \max_b[\Tr( \omega_\lambda M_{t,b})] \Tr(\sigma_* N_\lambda)
\leq 6 \max_\lambda \{ \frac{1}{3} \sum_{t=1}^3 \max_b[\Tr( \omega_\lambda M_{t,b})] \} \sum_\lambda \Tr(\sigma_* N_\lambda) \\
 & \leq 5 \Tr(\sigma_*),
\end{split}    
\end{equation}
where we have used that $\frac{1}{3} \sum_{t=1}^3 \max_b[\Tr( \omega_\lambda M_{t,b})] \leq \frac{5}{6}$, which is a standard step in proving the origin noncontextuality inequality, see \cite{MazurekPuseyKunjwalReschSpekkens-noncontextuality, BudroniCabelloGuhneKleinmann-contextuality}. To see that $\sum_\lambda \Tr(\sigma_* N_\lambda) = \Tr(\sigma_*)$ simply note that $\sigma_* \in \cone(K)$ and so $\sigma_* = \Tr(\sigma_*) \tilde{\sigma}_*$ where $\tilde{\sigma}_* \in K$. Then $\sum_\lambda \Tr(\sigma_* N_\lambda) = \Tr(\sigma_*) \sum_\lambda \Tr(\tilde{\sigma}_* N_\lambda) = \Tr(\sigma_*)$ as a result of the normalization of $N_\lambda$.
\end{proof}

\ineqToWitness*
\begin{proof}
The proof is straightforward: assume that $\rho_{AB}$ is separable and let $\tau_{AB}$ be a separable state such that there exists preparation and measurement noncontextual hidden variable model for $\Lambda_A((1-\varepsilon) \rho_{AB} + \varepsilon \tau_{AB})$ for all $\varepsilon \in (0, \delta)$ for suitable $\delta \in (0, 1)$. Denote $\sigma_{t,b,\varepsilon} = \Tr_B((\I_A \otimes E_{t,b})((1-\varepsilon) \rho_{AB} + \varepsilon \tau_{AB}))$ and $\sigma_{*,\varepsilon} = \Tr_B((\I_A \otimes E_{*})((1-\varepsilon) \rho_{AB} + \varepsilon \tau_{AB}))$, then Eq.~\eqref{eq:ineqToWitness-ineqNC} becomes $\sum_{t=1}^3 \sum_{b=0}^1 \Tr(\sigma_{t,b,\varepsilon} M_{t,b}) \leq 5 \Tr(\sigma_{*,\varepsilon})$. We thus get
\begin{equation}
\sum_{t=1}^3 \sum_{b=0}^1 \Tr\{[E_{t,b} \otimes M_{t,b}][(1-\varepsilon) \rho_{AB} + \varepsilon \tau_{AB}]\} \leq 5 \Tr\{[E_* \otimes \I_B][(1-\varepsilon) \rho_{AB} + \varepsilon \tau_{AB}]\}
\end{equation}
for all $\varepsilon \in (0, \delta)$. Taking the limit $\varepsilon \to 0^+$ yields Eq.~\eqref{eq:ineqToWitness-ineqSEP}.

To show that there is an entangled state that violates Eq.~\eqref{eq:ineqToWitness-ineqSEP} simply assume that $\sigma_{t,b}$ and $M_{t,b}$ are states and POVMs that violate Eq.~\eqref{eq:ineqToWitness-ineqSEP}, it is known that such states and POVMs exist \cite{MazurekPuseyKunjwalReschSpekkens-noncontextuality, BudroniCabelloGuhneKleinmann-contextuality}. Let $\ket{\phi^+} = \frac{1}{\sqrt{\dim(\Ha)}} \sum_{i=1}^{\dim(\Ha)} \ket{ii}$ be the maximally entangled state and take $\rho_{AB} = \ketbra{\phi^+}$. Take $E_{t,b} = \dim(\Ha) \sigma_{t,b}^\intercal$, where $A^\intercal$ is the transposition of $A$ with respect to the basis $\ket{i}$. Then we have $\Tr_B[(\I_A \otimes E_{t,b}) \rho_{AB}] = \sigma_{t,b}$ and so Eq.~\eqref{eq:ineqToWitness-ineqSEP} must be violated because the corresponding noncontextuality inequality is violated.
\end{proof}

To construct explicit example of an entanglement witness, we consider qubit systems and the Pauli matrices $\sigma_x, \sigma_y, \sigma_z$. Then we define the effects
\begin{align}
 & E_{1,0} = \I + \sigma_z,
 & & E_{1,1} = \I - \sigma_z, \\
 & E_{2,0} = \I + \frac{\sqrt{3}}{2} \sigma_x - \frac{1}{2} \sigma_z,
 & & E_{2,1} = \I - \frac{\sqrt{3}}{2} \sigma_x + \frac{1}{2} \sigma_z, \\
 & E_{3,0} = \I - \frac{\sqrt{3}}{2} \sigma_x - \frac{1}{2} \sigma_z,
 & & E_{3,1} = \I + \frac{\sqrt{3}}{2} \sigma_x + \frac{1}{2} \sigma_z,
\end{align}
and $M_{t,b} = \frac{1}{2} E_{t,b}$. From this we obtain the entanglement witness $W = 2 \I_A \otimes \I_B - \frac{3}{2} ( \sigma_x \otimes \sigma_x + \sigma_z \otimes \sigma_z )$, so for every separable state $\rho_{AB}$ we have
\begin{equation}
\Tr[(\sigma_x \otimes \sigma_x) \rho_{AB}] + \Tr[(\sigma_z \otimes \sigma_z) \rho_{AB}] \leq \frac{4}{3}.
\end{equation}

%%%%%%%%%%%%%%%%%%%%%%%%%%%%%%%%%%%%%%%%%%%%%%%%%%
\section{Proof of Proposition~\ref*{prop:witnessToIneq}} \label{appendix:prop-witnessToIneq}
%%%%%%%%%%%%%%%%%%%%%%%%%%%%%%%%%%%%%%%%%%%%%%%%%%
\witnessToIneq*
\begin{proof}
In order to shorten the notation let us denote
\begin{equation}
\Xi(A,\sigma) = \Tr[(A_1 + A_2)(\sigma_{1+} - \sigma_{1-})] + \Tr[(A_1 - A_2)(\sigma_{2+} - \sigma_{2-})].
\end{equation}
The proof is straightforward: assume that there is preparation and measurement noncontextual hidden variable model for $K \subset \dens(\Ha)$. Then we have $\Tr( \sigma_{i \pm} A_j) = \sum_\lambda \Tr(\sigma_{i \pm} N_\lambda) \Tr(\omega_\lambda A_j)$ for all $i,j \in \{1,2\}$, this follows for example from $A_i = 2 M_i - \I$ where $0 \leq M_i \leq \I$ is an appropriate operator. We thus get
\begin{equation}
\begin{split}
\Xi(A,\sigma) & = \sum_\lambda \Tr[N_\lambda (\sigma_{1+} - \sigma_{1-})] \Tr[(A_1 + A_2) \omega_\lambda] + \sum_\lambda \Tr[N_\lambda (\sigma_{2+} - \sigma_{2-})] \Tr[(A_1 - A_2) \omega_\lambda] \\
& = \sum_\lambda \Tr[N_\lambda (\sigma_{1+} - \sigma_{1-} + \sigma_{2+} - \sigma_{2-})] \Tr(A_1 \omega_\lambda) + \sum_\lambda \Tr[N_\lambda (\sigma_{1+} - \sigma_{1-} - \sigma_{2+} + \sigma_{2-})] \Tr(A_2 \omega_\lambda) \\
& \leq \sum_\lambda \abs{\Tr[N_\lambda (\sigma_{1+} - \sigma_{1-} + \sigma_{2+} - \sigma_{2-})]} + \sum_\lambda \abs{\Tr[N_\lambda (\sigma_{1+} - \sigma_{1-} - \sigma_{2+} + \sigma_{2-})]},
\end{split}    
\end{equation}
where we have used that $\Tr(A_i \omega_\lambda) \in [-1,1]$ for all $i \in \{1,2\}$ and all $\lambda$. Let us fix $\lambda$ and let us inspect the term $\abs{\Tr[N_\lambda (\sigma_{1+} - \sigma_{1-} + \sigma_{2+} - \sigma_{2-})]} + \abs{\Tr[N_\lambda (\sigma_{1+} - \sigma_{1-} - \sigma_{2+} + \sigma_{2-})]} = \xi_\lambda$. There are four options on how the signs can be assigned: if $\Tr[N_\lambda (\sigma_{1+} - \sigma_{1-} + \sigma_{2+} - \sigma_{2-})] \geq 0$ and $\Tr[N_\lambda (\sigma_{1+} - \sigma_{1-} - \sigma_{2+} + \sigma_{2-})] \geq 0$ we get
\begin{equation}
\xi_\lambda = \Tr[N_\lambda (\sigma_{1+} - \sigma_{1-} + \sigma_{2+} - \sigma_{2-})] + \Tr[N_\lambda (\sigma_{1+} - \sigma_{1-} - \sigma_{2+} + \sigma_{2-})] = 2 \Tr[ N_\lambda (\sigma_{1+} - \sigma_{1-})] \leq 2 \Tr(N_\lambda \sigma_*),
\end{equation}
where the last inequality follows from Eq.~\eqref{eq:witnessToIneq-constraint} and from the positivity condition. If $\Tr[N_\lambda (\sigma_{1+} - \sigma_{1-} + \sigma_{2+} - \sigma_{2-})] \geq 0$ and $\Tr[N_\lambda (\sigma_{1+} - \sigma_{1-} - \sigma_{2+} + \sigma_{2-})] < 0$ we get
\begin{equation}
\xi_\lambda = \Tr[N_\lambda (\sigma_{1+} - \sigma_{1-} + \sigma_{2+} - \sigma_{2-})] - \Tr[N_\lambda (\sigma_{1+} - \sigma_{1-} - \sigma_{2+} + \sigma_{2-})] = 2 \Tr[ N_\lambda (\sigma_{2+} - \sigma_{2-})] \leq 2 \Tr(N_\lambda \sigma_*),
\end{equation}
if $\Tr[N_\lambda (\sigma_{1+} - \sigma_{1-} + \sigma_{2+} - \sigma_{2-})] < 0$ and $\Tr[N_\lambda (\sigma_{1+} - \sigma_{1-} - \sigma_{2+} + \sigma_{2-})] \geq 0$ we get
\begin{equation}
\xi_\lambda = - \Tr[N_\lambda (\sigma_{1+} - \sigma_{1-} + \sigma_{2+} - \sigma_{2-})] + \Tr[N_\lambda (\sigma_{1+} - \sigma_{1-} - \sigma_{2+} + \sigma_{2-})] = 2 \Tr[ N_\lambda (\sigma_{2-} - \sigma_{2+})] \leq 2 \Tr(N_\lambda \sigma_*),
\end{equation}
if $\Tr[N_\lambda (\sigma_{1+} - \sigma_{1-} + \sigma_{2+} - \sigma_{2-})] < 0$ and $\Tr[N_\lambda (\sigma_{1+} - \sigma_{1-} - \sigma_{2+} + \sigma_{2-})] < 0$ we get
\begin{equation}
\xi_\lambda = - \Tr[N_\lambda (\sigma_{1+} - \sigma_{1-} + \sigma_{2+} - \sigma_{2-})] - \Tr[N_\lambda (\sigma_{1+} - \sigma_{1-} - \sigma_{2+} + \sigma_{2-})] = 2 \Tr[ N_\lambda (\sigma_{1-} - \sigma_{1+})] \leq 2 \Tr(N_\lambda \sigma_*).
\end{equation}
We thus have $\Xi(A,\sigma) \leq 2 \sum_\lambda \Tr(N_\lambda \sigma_*) = 2$ where we have used that $\sigma_* \in K$ and the normalization condition.
\end{proof}

\bibliography{citations}

\end{document}